\documentclass[a4paper,12pt]{amsart}

\usepackage[]{amsmath}
\usepackage{amsfonts}
\usepackage{amssymb}
\usepackage{xcolor}

\usepackage{tikz}
\usetikzlibrary{shapes.geometric,positioning}

\usepackage[margin=3cm]{geometry}

\DeclareMathOperator*{\argmin}{arg\,min}

\newtheorem{thm}{Theorem}[section]
\newtheorem{prop}[thm]{Proposition}
\newtheorem{lem}[thm]{Lemma}
\newtheorem{cor}[thm]{Corollary}
\newtheorem{conj}[thm]{Conjecture}

\theoremstyle{definition}
\newtheorem{defn}[thm]{Definition}

\newtheorem{rem}[thm]{Remark}
\newtheorem{observation}[thm]{Observation}
\newtheorem{example}[thm]{Example}

\newcommand{\lcut}{{\mathcal{LCUT}}}

\newcommand{\blue}[1]{\textcolor{black}{#1}}

\newcommand{\NNI}{{N\negthinspace N\negthinspace I}}
\newcommand{\TBN}{T\negthinspace B\negthinspace N}
\newcommand{\PN}{P\negthinspace N}
\newcommand{\TB}{T\negthinspace B}

\title[How tree-based is my network?]{How tree-based is my network? Proximity measures for unrooted phylogenetic networks}
\author{Mareike Fischer}
\author{Andrew Francis}
\address{Institute of Mathematics and Computer Science, University of Greifswald, Germany}
\email{email@mareikefischer.de}
\address{Centre for Research in Mathematics an Data Science, Western Sydney University, Australia}
\email{a.francis@westernsydney.edu.au}

\date{\today}
\begin{document}

\begin{abstract}
\blue{
Tree-based networks are a class of phylogenetic networks that attempt to formally capture what is meant by ``tree-like'' evolution.  A given non-tree-based phylogenetic network, however, might appear to be very close to being tree-based, or very far. In this paper, we formalise the notion of proximity to tree-based for unrooted phylogenetic networks, with a range of proximity measures.  These measures also provide characterisations of tree-based networks.  One measure in particular, related to the nearest neighbour interchange operation, allows us to define the notion of ``tree-based rank''.  This provides a subclassification within the tree-based networks themselves, identifying those networks that are ``very'' tree-based.  Finally, we prove results relating tree-based networks in the settings of rooted and unrooted phylogenetic networks, showing effectively that an unrooted network is tree-based if and only if it can be made a rooted tree-based network by rooting it and orienting the edges appropriately.  This leads to a clarification of the contrasting decision problems for tree-based networks, which are polynomial in the rooted case but NP complete in the unrooted. }

\end{abstract}
\maketitle

\section{Introduction} \label{s:intro}
Phylogenetic networks are a generalization of phylogenetic trees that allow for representation of reticulation events such as hybridization or horizontal gene transfer, and can also be used to represent uncertainty.  While \blue{horizontal gene transfer}, for instance, means that the evolutionary history of species cannot be adequately modeled by a tree, it has nevertheless been argued that evolution may be fundamentally ``tree-like'', in the sense that evolution basically follows a tree-like structure with occasional additional arcs here and there~\cite{martin2011early,francis2015phylogenetic}. 
This debate inspired the definition of a ``tree-based network'', which is a network that can be obtained from a tree by the addition of new arcs between arcs of the tree, initially defined in the rooted setting~\cite{francis2015phylogenetic}. This family of networks has received quite a bit of theoretical development over the last few years, including extensions to the non-binary case~\cite{jetten2016nonbinary,zhang2016tree}, and the unrooted setting~\cite{francis2018tree,hendriksen2018tree,fischer2018nonbinary}.  Numerous characterizations of rooted tree-based networks have been published, many using interesting combinatorial constructions such as graph matchings~\cite{francis2015phylogenetic,jetten2016nonbinary,zhang2016tree,francis2018new,fischer2018classes}.

Not all networks are tree-based of course, and it was observed~\cite[Corollary 1]{francis2015phylogenetic} that any network can be made tree-based by adding additional leaves, which means that a network connecting a set of taxa may fail to be tree-based simply because of incomplete sampling. Nevertheless, measuring the extent to which a network is tree-based is an important question, and brings all phylogenetic networks into the conversation about tree-basedness. In the rooted setting, i.e. for a rooted phylogenetic network $N^r$ with leaf set $X$, several definitions of distance from being tree-based were introduced in~\cite{francis2018new}, recently extended to the non-binary setting in~\cite{pons2018tree}:

\begin{enumerate}
	\item Any rooted spanning tree of $N^r$ has the leaves of $N^r$ 
	and possibly others: let $\ell(N^r)$ denote the minimal number of others. 
	\item If $N^r$ is not tree-based it can be made tree-based by adding extra leaves. Let $t(N^r)$ denote the smallest number of leaves needed. 
	\item {The vertex set } $V(N^r)$ of $N^r$ can be partitioned by a set of disjoint paths, of which there will be at least $|X|$ plus possibly some extras.  Let $p(N^r)$ denote the number of extras. 
\end{enumerate}
\blue{These rooted proximity measures are illustrated in Figure~\ref{f:1-leaf-rooted-eg}.}
All three are zero if and only if $N^r$ is tree-based, and have been shown to be generically equal to each other in the rooted setting. This provides several equivalent lenses through which to see non-tree-basedness in rooted networks~\cite{francis2018new,pons2018tree}.
\begin{figure}[ht]
	\includegraphics[]{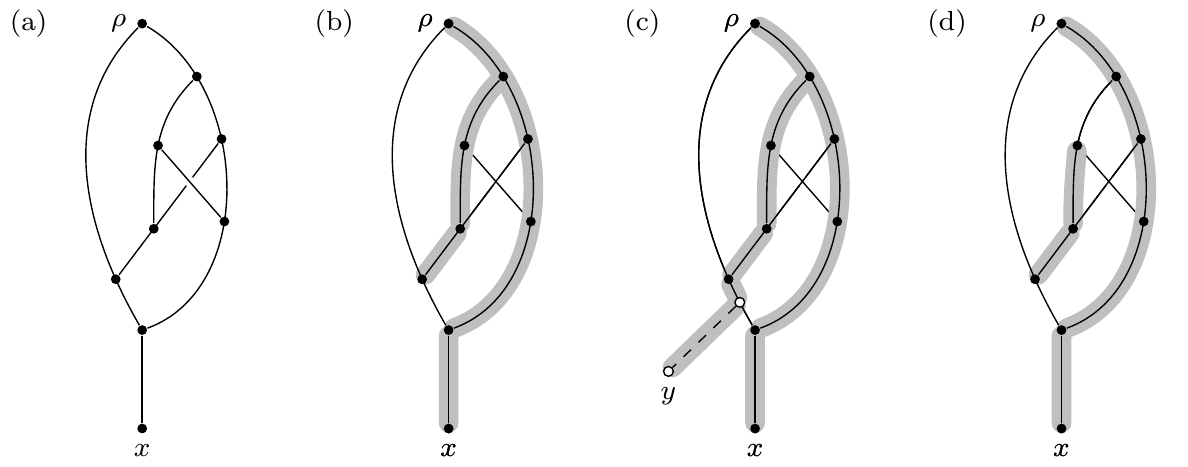}
	\caption{\blue{(a) A rooted phylogenetic network $N^r$ on one leaf $X=\{x\}$ that is not tree-based. (b) The same network with a spanning tree shown shaded, which has one additional leaf not from $X$, showing $\ell(N^r)=1$. (c) The network with an additional leaf $y$, making it tree-based, showing $t(N^r)=1$. (d) The network with vertices partitioned into 2 disjoint paths (one more than $|X|$), showing $p(N^r)=1$.}}\label{f:1-leaf-rooted-eg}
\end{figure}

In this paper, we define eight measures of distance from tree-basedness for \emph{unrooted} phylogenetic networks, both in the binary and non-binary settings. Some of these are straightforward adaptations from the rooted case, but others require subtle modifications.  Five of these are entirely new measures, in the sense that they are not unrooted versions of known rooted measures.  One of them can be shown to be equal to the generalizations, whereas the others can be shown to be distinct. 

Possibly the most interesting of the new measures \blue{introduced in Section~\ref{s:3.new.measures}}, namely $\delta_{\NNI}$, is based on so-called nearest neighbor interchange (NNI) moves. $\delta_{\NNI}$ measures the NNI distance of a non-tree-based network to the nearest tree-based one. 

As we explore in Section \ref{s:NNI.questions}, this new measure cannot only measure the distance from a non-tree-based network to a tree-based one (as the other measures also do), but additionally, it can measure the proximity of a tree-based network to a non-tree-based one. This leads to an interesting measure of robustness of tree-basedness for a given network \blue{$N$, that we call its \emph{tree-based rank}, $||N||_{\TB}$.  This rank takes non-negative values for tree-based networks, and negative values for non-tree-based networks.  Tree-based networks ``on the boundary'' of the space --- meaning that they are one NNI move from losing tree-based-ness --- have tree-based rank zero.} \blue{While it is now known that the space of tree-based networks is connected~\cite{fischer2019space}, this measure gives further structure to this space, and opens many further questions.}

\blue{The paper concludes with a focus on the connections between rooted and unrooted tree-based networks.  This exploration is along the lines of~\cite{huber2019rooting}, which seeks to understand what conditions on an unrooted phylogenetic network allow it to be made into a rooted phylogenetic network.  Here, we restrict to tree-basedness, and are able to obtain some characterizations that are clear, if not completely unsubtle.  For instance, Corollary~\ref{t:rooted-unrooted} provides a characterization for binary networks as follows:}  
\newtheorem*{root-unroot}{Corollary \ref{t:rooted-unrooted}}
\blue{
\begin{root-unroot} 
A {binary} unrooted phylogenetic network $N$ on $X$ is tree-based (in the unrooted sense) if and only if it can be rooted on the midpoint of an edge, and with orientations specified on the edges, to give a {binary} rooted phylogenetic network $N^r$ that is ``phylogenetically'' tree-based (i.e. it has a support tree whose root has out-degree 2). 
\end{root-unroot}
Theorems~\ref{t:degenerate.rooted-unrooted} and~\ref{t:strictly.rooted.unrooted} provide similar results for general unrooted tree-based networks, and for ``strictly'' tree-based networks, respectively. }

\blue{The paper is structured as follows.  We begin with background definitions for phylogenetic networks in Section~\ref{s:definitions}, and then introduce the first four proximity measures in Section~\ref{s:four.equal.measures}. These measures are all equal, and three of them are direct generalizations of the measures defined for rooted phylogenetic networks in~\cite{francis2018new}.  We then introduce an additional four measures in Section~\ref{s:3.new.measures}, one of which is the measure $\delta_\NNI$ related to the nearest neighbour interchange move, and mentioned already above.  The notion of tree-based rank is also introduced in this section, and further questions relating to it are discussed.  Section~\ref{s:summary} summarizes the eight proximity measures and what we know about their inter-relationships.  The final section of new content in the paper, Section~\ref{s:rooted}, looks back at the connection to rooted phylogenetic networks, and proves several results for different classes of networks along the lines of ``the unrooted phylogenetic network $N$ is tree-based if and only if there is a selection of place for a root, and an orientation on edges, that makes it a rooted tree-based phylogenetic network''.  }
\blue{We end in Section~\ref{s:discussion} with a summary of the main results and some further avenues for development.}

\section{Definitions and background}\label{s:definitions}

An \blue{\emph{unrooted phylogenetic network}} on a set $X$ (typically a set of species or taxa) is a connected graph without degree 2 vertices, whose degree 1 vertices (leaves) are {bijectively} labelled by the elements of $X$. In the following, whenever there is no ambiguity, we use the term \blue{\emph{network}} to refer to an unrooted phylogenetic network. 
\blue{Note that }we do not restrict the degree of the non-leaf vertices, so we deal with the general ``non-binary'' case.  \blue{That is, in this paper we allow the internal (non-leaf) vertices of the network to have any degree $>2$} (in contrast, the internal vertices of ``binary'' networks all have degree exactly 3)\footnote{{Note that the term ``non-binary'', which is generally used in the literature, is actually a bit misleading, as it summarizes phylogenetic networks whose internal vertices all have degree 3 (i.e. the binary ones) as well as those whose internal vertices have any degree $>2$. Thus, in particular, binary networks are ``non-binary'', too.}}.

The special case of an acyclic phylogenetic network is called a \blue{\emph{phylogenetic tree}}.

Throughout this manuscript, we will  assume that $|X|\ge 2$, and that $N$ is \emph{proper}. A proper network is one for which all components obtained by removing a cut edge or cut vertex contain at least one element of $X$ (following the definition of \cite{fischer2018nonbinary}, more general than the one given in~\cite{francis2018new}).  Note that a network that is not proper cannot be tree-based (cf. for instance ~\cite{fischer2018nonbinary}). {We will denote the set of proper unrooted phylogenetic networks on $X$ by $\PN(X)$.} 

\blue{We will need the concepts of ``deletion'' of an edge and ``suppression'' of a degree 2 vertex, defined as follows.  If $G$ is a graph with a vertex $v$ of degree 2 having neighbours $u,w$, so that $\{ u,v\}$ and $\{v,w\}$ are edges of $G$, then \emph{suppressing} $v$ means removing $v$ from the vertex set of $G$ and replacing the edges $\{u,v\},\{v,w\}$ by the edge $\{u,w\}$; that is, creating a new graph $G'$ whose vertex set is $V(G)\setminus\{v\}$, and whose edge set is $(E(G)\setminus\{\{u,v\},\{v,w\}\})\cup\{\{u,w\}\}$.  Note that $E(G)$ is a set, so in the event that $\{u,w\}$ was already an edge of $G$, this process results in a graph with edge set $E(G)\setminus\{\{u,v\},\{v,w\}\}$ (in particular, no parallel edges are created). If $G$ is a graph with edge set $E$, the graph $G'$ obtained from $G$ by \emph{deleting} the edge $e=\{u,v\}\in E$ has edge set $E'=E\setminus\{e\}$ and vertex set obtained from $V(G)$ by suppressing the vertices $u,v$ if they have become degree 2.  }

If $k$ is minimal such that the deletion of ${k}$ edges of ${N}$ would turn ${N}$ into a tree (i.e. a connected acyclic graph), we say that ${N}$ has \emph{tier} ${k}$. Note that the tier does not depend on $N$ being a phylogenetic network -- in fact, the tier of a connected graph can be defined analogously, and for technical reasons, we need this later on in this manuscript.

A related concept that we need to introduce is the {\it level} of a phylogenetic network. In this regard, recall that a {\it blob} of a network (or, more generally, of a graph) is a maximal connected subgraph that has no cut edge (if such a blob consists of only one vertex, it is called {\it trivial}). {Note that while in a binary phylogenetic network, i.e. a network in which all internal vertices have degree 3, blobs cannot contain any cut vertices (as all cut vertices in a binary network are incident to a cut edge), a blob in a non-binary phylogenetic network is \blue{explicitly} allowed to contain cut vertices (see \cite{fischer2018nonbinary} for more details).} A phylogenetic network (or a graph) is called {\it simple} if it contains at most one non-trivial blob. Now, a phylogenetic network (or graph) $N$ is said to have {\it level} $k$, 
if the maximal tier of the blobs of $N$ is $k$ (consequently, for any network $N$, $level(N)\le tier(N)$).

While the tier and the level of $N$ are related concepts, they may be arbitrarily different. 
For instance, the network in \blue{Figure~\ref{f:level-5-example} has two blobs, and is level 5 and tier $10$. The tier can be raised arbitrarily by additions of further blobs.}

\begin{figure}[ht]
\includegraphics[]{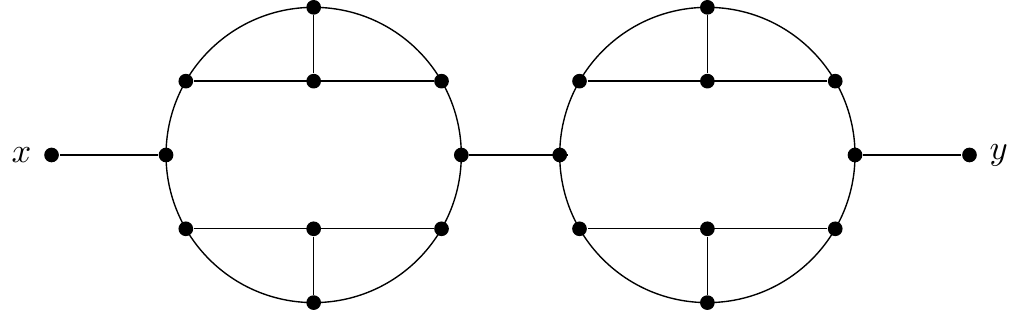}
\caption{\blue{A phylogenetic network that is level 5 and tier 10.}}\label{f:level-5-example}
\end{figure}

\blue{In the following, we denote by $V^d(N)$ the set of degree $d$ vertices in $N$, so that $V^1(N)=X$ is the set of leaves.  A \emph{spanning tree} of a graph is an acyclic subgraph that contains all the vertices of the graph.}
A \emph{support tree} $T$ of a network $N$ is a spanning tree of $N$ satisfying $V^1(T)=V^1(N)=X$, that is, whose leaf set coincides with the leaf set $X$ of $N$ \blue{(see Figure~\ref{f:level5}(ii) for an example of a support tree)}.  If $N$ contains such a support tree $T$, it is called \emph{tree-based}. This is a direct generalization of the definition for binary phylogenetic networks in~\cite{francis2018tree}, and coincides with the notion of ``loosely'' tree-based introduced in~\cite{hendriksen2018tree}. Note that $T$ is not necessarily a phylogenetic tree as it may contain degree-2 vertices. We denote the set of tree-based networks on $X$ by $\TBN(X)$.  

As a side note, observe that if $N$ is a (proper) network with a spanning tree $T$ with exactly two leaves, then it is tree-based. This is because if one leaf of $T$ is not from $X$, then $N$ has only one leaf, in which case it consists of only one vertex (proper networks with one leaf are trivial~\cite[Remark 1]{fischer2018nonbinary}). Therefore both leaves are from $X$, and so $N$ has a support tree and is tree-based.  Furthermore, all proper {binary} networks of level less than or equal to 4 are also tree-based~\cite{francis2018tree}, {and the same is true for all proper non-binary networks of level less than or equal to 3~\cite{fischer2018nonbinary}}.

Generalizing support trees, we introduce the notion of a \emph{support network} $\widehat{N}$ of $N$ to be a connected subgraph of $N$ containing all of the vertices of $N$ and a subset $\widehat{E}\subseteq{E(N)}$ such that the leaf set of $\widehat{N}$ coincides with the leaf set $X$ of $N$. Note that a support network $\widehat{N}$ need not necessarily be a phylogenetic network, as it may contain vertices of degree 2.  \blue{An example of a support network is given in Figure~\ref{f:support.network}.} 

\begin{figure}[ht]
\includegraphics[]{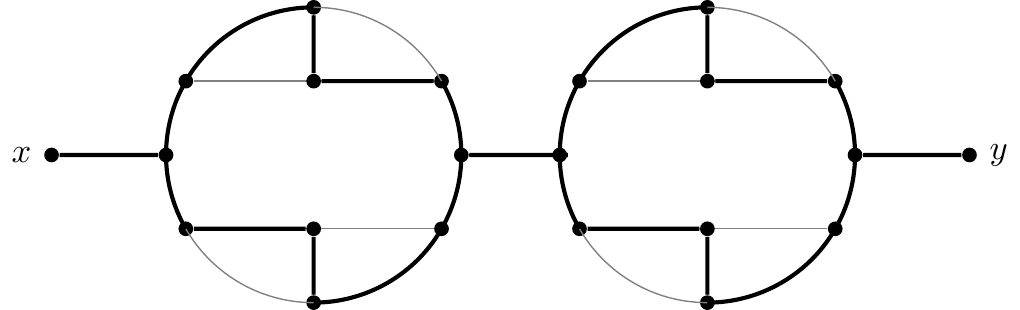}
\caption{\blue{A support network for the phylogenetic network shown in Figure~\ref{f:level-5-example}, shown in bold.} }\label{f:support.network}
\end{figure}

We call a phylogenetic network $N$ {\it tier-${k}$-based}, if it contains a support network $\widehat{N}$ of tier ${k}$. Note that if $N$ is tier-$0$-based, then $\widehat{N}$ is a support tree of $N$, and $N$ is tree-based. 
Moreover, note that if $N$ is a tier-$k'$ network which is tier-$k$-based, then $k\leq k'$, as $N$ can act as its own support network.

Similarly, we call a phylogenetic network $N$ {\it level-${k}$-based}, if it contains a support network $\widehat{N}$ of level ${k}$. Note that if $N$ is level-$0$-based, then $\widehat{N}$ is a support tree of $N$, and $N$ is tree-based. 
As for the tier, note that if $N$ is a level-$k'$ network which is level-$k$-based, then $ k\leq k'$, as $N$ can act as its own support network.
\blue{For example, the support network shown in Figure~\ref{f:support.network} is level 1 and tier 2, so this network is level-1-based and tier-2-based.}

In the course of this paper we will need the idea of adding edges and leaves to a network. This requires ``subdivision'' of edges.  

{An edge $\{u,v\}$ of a phylogenetic network may be \emph{subdivided} by first creating a new vertex $w$, then deleting $\{u,v\}$ and adding in $\{u,w\}$ and $\{w,v\}$.  That is, subdividing $\{u,v\}$ amounts to creating a new network $N'$ from $N$ whose vertex set is $V(N')=V(N)\cup\{w\}$ and whose edge set is $E(N')=(E(N)\setminus\{\{u,v\}\})\cup\{\{u,w\},\{v,w\}\}$.}

To add an internal edge to a phylogenetic network, first subdivide two edges $\{u_1,v_1\}$ and $\{u_2,v_2\}$, creating new vertices $w_1$ and $w_2$ respectively, and then add a new edge $\{w_1,w_2\}$. To add a leaf to an edge $\{u,v\}$, we subdivide it creating a new vertex $w$, then add an additional new vertex $x$ (the leaf) and the edge $\{w,x\}$.

Moreover, we will refer to the \emph{leaf cut graph} $\lcut(N)$ of a proper network $N\in\PN(X)$, with $|V(N)|\ge 3$, 
which is the graph $G$ obtained from $N$ by deleting all leaves and their incident edges~\cite{fischer2018classes}. Note that this may result in some vertices of degree 2 and -- e.g. if $N$ is a tree -- even new leaves not labelled by $X$, which we do \emph{not} remove.  We will also consider the \emph{simplified} $\mathcal{LCUT}$ graph $\mathcal{LCUT^{\text{simp}}}(N)$, which results from repeating the leaf deletion (even of leaves not from $X$) and -- if applicable -- suppressing degree-2 vertices until no such operation is possible anymore.

\blue{A \emph{path} in a network $N$ is a sequence of distinct vertices $v_1,v_2,\dots,v_k\in V(N)$ such that $\{v_i,v_{i+1}\}\in E(N)$ for all $1\le i<k$.}

The final concept we need to introduce for this manuscript is the \emph{nearest neighbor interchange (NNI)}, which is a replacement of a path in the network with an alternative path, defined as follows for unrooted phylogenetic networks in~\cite{huber2016transforming}.  

\begin{defn}[Nearest Neighbor Interchange (NNI)]
Let $N$ be a phylogenetic network in which $a,b,c,d$ is a path for which neither $\{a,c\}$ nor $\{b,d\}$ is an edge.  The NNI operation on this path replaces it with the path $a,c,b,d$: the edges $\{a,b\}$ and $\{c,d\}$ are deleted, and edges $\{a,c\}$ and $\{b,d\}$ are added. 
The NNI distance $d_\NNI(N_1,N_2)$ between two unrooted phylogenetic networks $N_1$ and $N_2$ is defined to be the minimum number of NNI moves \blue{necessary to obtain one from the other}.
\end{defn}

We now turn to proximity measures.

\section{Unrooted proximity measures}\label{s:four.equal.measures}

We wish to analyze the proximity of phylogenetic networks to being tree-based, and in this regard, we define the following measures. Measures (1)--(3) are analogous to the ones presented in~\cite{francis2018new} for rooted networks; we will discuss this relationship more in Section~\ref{s:rooted}. However, (4) gives rise to a new measure based on the tier of a network, which we later show to be identical to the ones given by (1)--(3) (cf. Theorem \ref{t:equalities}), so the tier provides a new perspective on the other measures. In Section \ref{s:new.measure.e} we will consider four additional new measures which can be shown to be different from the ones presented in this section.

\begin{defn}\label{d:four.measures}
Let $N$ be an unrooted phylogenetic network of level $k$.  
\begin{enumerate}
	\item Let $\ell(N):=\min\{|V^1(T)\setminus X|\mid T\text{ a spanning tree for }N\}$.
	\item Let $t(N):=\min\{|\widehat{X}| \mid \widehat{X}$ is a set of leaves that can be added to \blue{edges in} $N$ in order to $\left. \text{make $N$ tree-based}\right\}$; 
	\item Let $p(N):=k-|X|+1$, where $k$ is minimal such that there exists a sequence of paths $(\pi_1,\dots,\pi_k)$ in $N$ partitioning $V(N)$ and satisfying the property that for all $i=2,\ldots,k$, an endpoint of $\pi_i$ is adjacent to one or more of the paths $\pi_1,\dots,\pi_{i-1}$.
	\item Let $\tau(N):=\min\limits \{{k}\mid \mbox{$N$ is tier-${k}$-based}\}$. 
\end{enumerate}
\end{defn}

We are now in the position to prove the main theorem of this section.

\begin{thm}\label{t:equalities}
If $N\in \PN(X)$ 
and $|X|\ge 2$, then $\ell(N)=p(N)=t(N)=\tau(N)$.  

In particular, $N$ is tree-based if and only if $\ell(N)=t(N)=p(N)=\tau(N)=0$.
\end{thm}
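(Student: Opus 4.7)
The plan is to establish the chain of equalities by proving $\ell(N)$ equals each of $\tau(N)$, $t(N)$, and $p(N)$ separately, in each case via a pair of constructions (one in each direction). The ``in particular'' clause then follows at once: $N$ is tree-based iff it admits a support tree, i.e., a spanning tree with no extra non-$X$ leaves, i.e., $\ell(N)=0$, which combined with the three equalities gives the characterization.

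For $\ell(N)=\tau(N)$, the inequality $\tau(N)\leq\ell(N)$ starts from an optimal spanning tree $T$ with extra non-$X$ leaves $Y$; properness of $N$ forces $\deg_N(y)\geq 3$ for each $y\in Y$, so each $y$ has $\geq 2$ incident edges outside $T$, and adding one such edge back per $y$ yields a support network of tier at most $|Y|=\ell(N)$. For the reverse direction $\ell(N)\leq\tau(N)$, we iteratively remove $\tau(N)$ edges from a minimal support network $\widehat{N}$, using the key lemma that in any connected graph with tier $\geq 1$ and at least one degree-1 vertex, every cycle contains a vertex of degree $\geq 3$ (otherwise the cycle would form a leafless connected component, contradicting connectivity). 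Selecting a cycle edge incident to such a degree-$\geq 3$ vertex creates at most one new extra leaf per step, for at most $\tau(N)$ extra leaves overall. A parallel argument handles $\ell(N)=t(N)$: attaching a new pendant leaf to some $N$-edge incident to each extra leaf of $T$ but not in $T$ yields an augmentation with support tree, so $t(N)\leq\ell(N)$; conversely, given a support tree $T'$ of an optimal $t$-augmentation $N'$, one recovers a spanning tree of $N$ by keeping every untouched $T'$-edge plus every originally subdivided edge $\{u,v\}$ both of whose halves $\{u,w_i\},\{w_i,v\}$ are in $T'$. Connectivity holds because any $T'$-path between $V(N)$-vertices avoids the pendant leaves and hence traverses both subdivision halves at every $w_i$ it visits; extra leaves of the reconstructed tree are bounded by the number of ``one-sided'' subdivisions, which is at most $t(N)$.

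For $\ell(N)=p(N)$, the direction $p(N)\leq\ell(N)$ relies on the combinatorial fact that any tree with $\lambda\geq 2$ leaves admits a sequential path partition with exactly $\lambda-1$ paths, proved by induction via removal of a leaf-to-leaf path $\pi_1$ and recursion on the components of the residual forest, each rooted at its unique attachment vertex on $\pi_1$. Applied to the spanning tree with $\lambda=|X|+\ell(N)$ leaves (and viewed as a partition of $V(N)$ using only $T$-edges, which are a fortiori $N$-edges), this gives $k=|X|+\ell(N)-1$, so $p(N)\leq\ell(N)$. The reverse $\ell(N)\leq p(N)$ reconstructs a spanning tree from an optimal path partition by taking all within-path edges together with one attaching edge per $\pi_i$ for $i\geq 2$, totaling $|V(N)|-1$ edges that form a spanning tree; its leaves can only occur at the ``free'' path endpoints (2 from $\pi_1$ and $1$ from each subsequent $\pi_i$, so at most $k+1$ in total), yielding $\ell(N)\leq(k+1)-|X|=p(N)$. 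The main technical obstacle lies here: ensuring the sequential adjacency condition during the tree-path decomposition demands a careful ordering of the recursion so that each subtree's attachment vertex indeed sits at an endpoint of its first path, and similarly the tree-reconstruction from a path partition must order its attaching edges consistently with the given sequence.
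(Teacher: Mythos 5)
Your proof is correct, but it is organized quite differently from the paper's. The paper establishes all four equalities with a single cyclic chain $\ell(N)\le p(N)\le t(N)\le\tau(N)\le\ell(N)$ — four one-directional arguments — whereas you prove three separate two-way equivalences with $\ell(N)$ as the hub, six arguments in all. Two of yours coincide with the paper's: your $\tau(N)\le\ell(N)$ (re-adding one non-tree edge of $N$ per extra leaf of an optimal spanning tree) is exactly the paper's final link, and your reconstruction of a spanning tree from a sequential path partition, giving $\ell(N)\le p(N)$, is the paper's first. Your decomposition of a tree with $\lambda$ leaves into $\lambda-1$ sequentially attached paths is also the engine of the paper's proof of $p(N)\le t(N)$ — but the paper applies it to a support tree of the leaf-augmented network $N'$ and then prunes the pendant edges to the added leaves, while you apply it directly to a minimal-leaf spanning tree of $N$, obtaining $p(N)\le\ell(N)$ somewhat more cleanly. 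Genuinely new in your write-up are the two directions the paper gets for free from its cycle: the direct $\ell(N)\le\tau(N)$, deleting cycle edges chosen incident to a degree-$\ge 3$ vertex of the cycle (the paper instead proves $t(N)\le\tau(N)$ by adding leaves to break cycles; your key lemma — in a connected graph with a degree-1 vertex, every cycle meets a vertex of degree $\ge 3$ — is sound, and the $X$-leaves of the support network guarantee such a vertex at every step), and the direct $\ell(N)\le t(N)$ via reconstructing a spanning tree of $N$ from a support tree $T'$ of a minimal augmentation. The latter is the one spot needing more care than you give it: a minimal augmentation may place several added leaves along the same original edge of $N$, so your dichotomy of ``both halves''/``one-sided'' subdivisions is too coarse. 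The fix is routine: include an original edge in the reconstructed tree iff its \emph{entire} subdivided path lies in $T'$; note each subdivision vertex $w_i$ keeps at least one path-half in $T'$ (it is not a leaf of $T'$, its pendant edge being forced into $T'$); and observe that a partially included path with $r$ subdivision points creates at most $\min(2,r)\le r$ extra leaves, so the total stays bounded by $t(N)$, and your connectivity argument via $T'$-paths avoiding pendant leaves goes through unchanged. This is fixable bookkeeping rather than a flaw in the approach; on balance, your hub-and-spoke route buys conceptual transparency (each measure is tied to $\ell$ by an explicit two-way construction) at the cost of two extra arguments and precisely the delicate direction that the paper's cyclic ordering was arranged to avoid.
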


\begin{proof}
We will show that $\ell(N)\leq p(N) \leq t(N) \leq \tau (N) \leq \ell(N)$. 

We begin with $\ell(N)\le p(N)$.  Let $(\pi_1,\dots,\pi_k)$ be a sequence inducing $p(N)$, i.e.  partitioning $V(N)$ such that an endpoint of $\pi_i$ is adjacent to one or more of the paths $\pi_1,\dots,\pi_{i-1}$ for all $i=2,\ldots,k$. We construct a spanning tree $T$ of $N$ as follows: start with $\pi_1$, and attach the path $\pi_i$ for all $i=2,\ldots,k$ according to the order induced by the sequence. To be precise, given $\pi_1$, we know that one endpoint of $\pi_2$ is adjacent to $\pi_1$, so we only need to add one edge from $N$ in order to connect both paths, and this connection will result in a tree with at most 3 leaves (as one endpoint of $\pi_2$ is not a leaf of the resulting tree, but both leaves of $\pi_1$ might be). We subsequently repeat this for following $\pi_i$, in each step adding at most one leaf to the spanning tree. This results in a spanning tree $T$ of $N$ with at most $k+1=p(N)+|X|$ leaves (at most 2 from $\pi_1$ and at most 1 for all other $i\in \{2,\ldots,k\}$). 
So clearly, a spanning tree of $N$ with the minimum number of leaves will have at most $p(N)+|X|$ leaves. By definition of $\ell(N)$, this implies $\ell(N)+|X|\leq p(N)+|X|$ and thus $\ell(N)\leq p(N)$. 

Next we show that $p(N)\le t(N).$

Suppose that $t(N)$ leaves are added to $N$ to make a tree-based network $N'$, and that this is minimal.  Then there is a spanning tree $T'$ of $N'$ whose leaves are the $|X|+t(N)$ leaves of $N'$.  
We will use $T'$ to construct a set of paths that partition the vertices of $N'$, as follows.  

Choose a path $\pi_1$ between two leaves in $T'$ that are also leaves of $N$ (recall that by assumption, $|X|\geq 2$).  Delete all edges $\{u_i,v_i\}$ of $T'$ that are not in the path but for which one endpoint (say $u_i$) is on the path.  What remains apart from the path itself are a set of trees $T_i$ with one vertex $v_i$ that is one edge in $N'$ distant from the path, and whose leaves are all leaves of $N'$.  For each such tree $T_i$, choose a path in $T_i$ from $v_i$ to a leaf, number the paths arbitrarily from $\pi_2$ onwards, and as before, delete all edges in $T_i$ that have one vertex in the path and the other not.  This again creates a set of sub-trees of $T_i$ for each $i$.  This process may be repeated until what remains is an ordered set of paths partitioning $V(N')$, each of which contains exactly one leaf of $N'$ except for the first path $\pi_1$, which contains two leaves.  That is, the number of such paths is one less than the number of leaves in $N'$, namely $|X|+t(N)-1$. These paths also satisfy the criterion of the definition of $p(N)$, namely that aside from the first, they have one endpoint a leaf on $N'$, and the other is adjacent to one of the earlier paths in the sequence.

From each of these paths whose corresponding leaf endpoint is not from $N$, delete the edge ending in the leaf.  Now we have $|X|+t(N)-1$ paths that partition $V(N)$, each of which (apart from the first) has one endpoint adjacent to a preceding path, as required by the definition of $p(N)$.  Some of these paths may of course be empty, and it may be that there is a more optimal choice of paths that does the job.  But in any case the minimal number of paths $k$ satisfies $k\le |X|+t(N)-1$, and so $p(N)=k-|X|+1\le t(N)$ as required.

Now we claim $t(N)\le\tau(N)$.  First, observe that if $\tau(N)=0$ then $N$ is tree-based and so $t(N)=0$, and the inequality holds.
So suppose now that we have a support network $\widehat N$ for $N$ of minimal tier $\tau(N)>0$.  We are going to add leaves to $\widehat N$ that in each case reduce the tier by at least 1.  Consider a cycle in $\widehat N$.  All cycles in $\widehat N$ must have vertices of degree 2, since otherwise the cycle would not be needed to cover all vertices, violating minimality of the tier of $\widehat N$.  Choose a degree 2 vertex $v$ in the cycle adjacent to a vertex $w$ of higher degree in $\widehat N$ (which must exist as otherwise the cycle would not be connected to the rest of $\widehat N$, contradicting the connectedness of support networks), add a new vertex $v'$ to the edge $\{v,w\}$, add a leaf to $v'$, and delete the edge $\{v',w\}$, from both $N$ and $\widehat N$. 

This creates a new network $N'$ and support network $\widehat N'$, but now $tier(\widehat N')<tier(\widehat N)$ because one cycle has been eliminated.  Thus at most $tier(\widehat N)$ additional leaves need to be added to $N$ to make a network that has a support network of tier 0, and is therefore tree-based.  It follows that $t(N)\le\tau(N)$.

Finally, we show that $\tau(N)\le\ell(N)$.
 
Let $T$ be a spanning tree of $N$ that has $\ell(N)$ extra leaves, i.e. leaves which are internal vertices of $N$ (additional to leaf set $X$). Each such extra leaf of $T$ has degree 1 in $T$ but degree at least 3 in $N$. In particular, each such vertex has an incident edge which is not contained in $T$. For each extra leaf, we add one such edge of $N$ to $T$, i.e. for each leaf of $T$,
we increase the tier by 1. The result is a support network $\widehat{N}$ of $N$ of tier $\ell(N)$. Therefore, by the minimality of $\tau(N)$, $\tau(N)\leq \ell(N)$. 

This completes the proof. 
\end{proof}

\begin{rem}
There is a key difference between the rooted and unrooted measures in the path-based measure $p(N)$, which does not directly generalize from rooted to unrooted.  A direct generalization might be a measure such as ``the number of edge disjoint paths that partition the network, minus $|X|$'', but this does not hold for unrooted networks, as can be seen in the level 5 example in Figure~\ref{f:level5}(i), which is not tree-based but whose vertex set can be partitioned by $|X|=2$ edge-disjoint paths.  The point is that paths in rooted phylogenetic networks are directed, and the information carried in that directedness needs to be captured in the unrooted characterization by the implied ordering on the paths in Definition~\ref{d:four.measures}(3).
\begin{figure}[ht] 
\includegraphics[width=\textwidth]{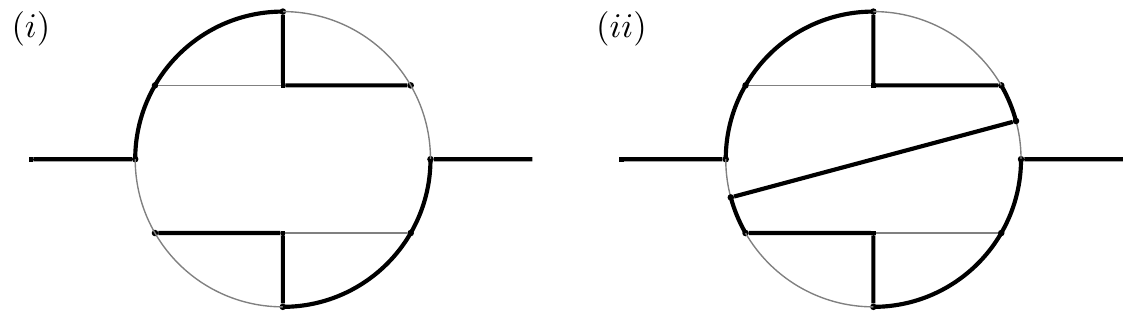}
\caption{$(i)$ A level 5 non-tree-based network whose vertices are covered by two edge-disjoint paths, shown in bold (this example of a non-tree-based network was independently presented in~\cite{fischer2018nonbinary} and the erratum \cite{francis2019correction} to~\cite{francis2018tree}).
$(ii)$ The same network with an additional diagonal edge  making it tree-based (support tree highlighted in bold). This example is referred to in Theorem~\ref{t:e(N)}.
}\label{f:level5}
\end{figure}
\end{rem}

It can be easily shown that, while the rooted counterparts of $\ell$, $t$, and $p$ introduced in \cite{francis2018new} can all be calculated in polynomial time, this is not possible for the above unrooted proximity measures, because otherwise, it could easily be decided if they equal 0 (which would contradict the known NP-completeness of the unrooted tree-basedness decision problem~\cite{francis2018tree}, cf. Section \ref{s:discussion}). However, if it is already known that the $\lcut$ graph or the $\lcut^{\text{simp}}$ graph of an unrooted phylogenetic network $N$ have a Hamiltonian path\footnote{\blue{A Hamiltonian path in a graph is a path that visits every vertex in the graph exactly once.}}, all four measures in Definition~\ref{d:four.measures} are bounded, which we prove in the following proposition.  

\begin{prop}\label{p:lcut.Hamiltonian}
Let $N \in PN(X)$. Then, if $\lcut(N)$ or $\lcut^{\text{simp}}(N)$ has a Hamiltonian path then $\ell(N)\le 2$.
\end{prop}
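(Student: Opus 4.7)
The plan is to prove the proposition by exhibiting, in each case, a spanning tree $T$ of $N$ with at most $|X|+2$ leaves; the definition of $\ell(N)$ then yields $\ell(N)\le 2$. The common strategy is to use the Hamiltonian path $H$ as a ``backbone'' for $T$: interior vertices of $H$ automatically have degree at least $2$ in $T$, so the only candidates for non-$X$ leaves of $T$ are the two endpoints of the (lifted) backbone, provided we show that the rest of the construction contributes no further non-$X$ leaves.

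For the easier hypothesis, suppose $\lcut(N)$ has a Hamiltonian path $H$. Since $V(\lcut(N))=V(N)\setminus X$ and every edge of $\lcut(N)$ is an edge of $N$, the path $H$ visits every internal vertex of $N$ along edges of $N$. I would form $T$ from $H$ by attaching, for each $x\in X$, its unique incident edge in $N$. Then $T$ is a spanning tree of $N$ in which every $x\in X$ is a leaf, every interior vertex of $H$ has degree at least $2$, and the two endpoints of $H$ are leaves of $T$ only when no element of $X$ is adjacent to them. Thus $T$ has at most $|X|+2$ leaves.

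For the more delicate hypothesis, suppose instead that $\lcut^{\text{simp}}(N)$ has a Hamiltonian path $H$. Using the confluence of the simplification, I would factor it through the intermediate $2$-core $C$ of $\lcut(N)$ (obtained from $\lcut(N)$ by iteratively deleting degree-$1$ vertices), so that $\lcut^{\text{simp}}(N)$ arises from $C$ by suppression of the remaining degree-$2$ vertices. Each edge of $\lcut^{\text{simp}}(N)$ then corresponds to a chain in $C$ between two degree-$\ge 3$ vertices (pick one such chain per edge if parallel chains were collapsed). Lift $H$ to a path $H^*$ in $C\subseteq\lcut(N)$ by concatenating the chains corresponding to the edges traversed by $H$, and then extend $H^*$ to a spanning subgraph of $N$ by including (i) for every non-traversed chain, all but its final edge, so that it hangs off one endpoint as a pendant path; (ii) all edges of every pendant tree of $\lcut(N)$ outside $C$; and (iii) the unique $N$-edge of each $x\in X$. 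Acyclicity is straightforward since each pendant tree attaches to $C$ by exactly one edge and each non-traversed chain is truncated just before its second endpoint.

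The main obstacle is the leaf-count analysis in this second construction, requiring a case-by-case argument on each internal vertex $v\in V(N)\setminus X$ according to its role: a branching vertex of $\lcut^{\text{simp}}(N)$; a degree-$2$ vertex of $C$ on a traversed chain; the terminal or an interior vertex of a truncated non-traversed chain; or a vertex of a pendant tree of $\lcut(N)$. The crucial observation is that whenever the backbone contribution to $\deg_T(v)$ drops to $1$---as at the terminal of a truncated non-traversed chain, or at a leaf of an attached pendant tree---the deficit $\deg_N(v)-\deg_{\lcut(N)}(v)$ counts precisely the $X$-neighbours of $v$ in $N$, which are all reattached in step (iii). Combined with $\deg_N(v)\ge 3$, this forces $\deg_T(v)\ge 2$ for every such $v$, so the only possible non-$X$ leaves of $T$ are the at most two endpoints of $H^*$, and $\ell(N)\le 2$ follows.
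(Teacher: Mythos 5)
Your first case is correct and coincides with the paper's argument: the Hamiltonian path of $\lcut(N)$ plus the pendant leaf edges is a spanning tree whose only possible non-$X$ leaves are the two path endpoints. For the second case the paper offers only a one-line sketch (``the Hamiltonian path plus the pending subtrees induced by $N\setminus\lcut^{\text{simp}}(N)$''), and your lift-and-truncate construction is a genuine elaboration of that sketch; in particular your deficit observation --- that a vertex of degree $2$ in $\lcut(N)$ satisfies $\deg_N(v)-\deg_{\lcut(N)}(v)\ge 1$ and hence has an $X$-neighbour, so truncated chain terminals and pendant-tree leaves cannot become non-$X$ leaves of $T$ --- is exactly the point the paper leaves implicit.

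There is, however, a genuine gap in your structural claim that every edge of $\lcut^{\text{simp}}(N)$ corresponds to a chain of the $2$-core $C$, and implicitly that $V(N)$ is exhausted by $X$, the pendant trees of $\lcut(N)$, and the chains of $C$. Under the paper's suppression convention no parallel edges are created: if $\{u,w\}$ is already an edge, suppressing $v$ simply deletes $\{u,v\}$ and $\{v,w\}$. Consequently the simplification can \emph{erase} entire cyclic substructures of $C$ rather than merely contract chains: parallel chains between the same pair of branch vertices collapse (only one survives as an edge), a cycle attached to the rest of the graph at a single vertex disappears completely, and these deletions cascade through newly created leaves and degree-$2$ vertices. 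Concretely, let $N$ be the triangle $u,c_1,c_2$ with leaves $x_1,x_2,y$ attached to $c_1,c_2,u$ respectively. Then $N$ is proper, $\lcut(N)$ is the triangle (its own $2$-core, with no pendant trees), and $\lcut^{\text{simp}}(N)$ is the single vertex $u$, which trivially has a Hamiltonian path. Your construction then consists of $u$ together with the three leaf edges, which is disconnected and does not span $N$: the vertices $c_1,c_2$ fall into none of your categories (i)--(iii), lying on no chain corresponding to an edge of $\lcut^{\text{simp}}(N)$ and in no pendant tree of $\lcut(N)$. (The proposition itself still holds here: the path $u,c_1,c_2$ plus leaf edges is a support tree, so $\ell(N)=0$.)

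Note also that the repair is not just to attach truncated copies of the erased pieces and invoke the deficit argument, because an erased substructure can contain vertices of degree $\ge 3$ in $\lcut(N)$ with no guaranteed $X$-neighbour: a theta graph hanging off a single attachment vertex is erased entirely, and its second branch vertex has degree $3$ in $\lcut(N)$, so for such a vertex you must route two of the incident chain stubs through it in the spanning tree rather than rely on a reattached leaf. As written, your second case establishes the proposition only when the reduction from $C$ to $\lcut^{\text{simp}}(N)$ consists purely of chain contractions with both chain endpoints surviving; the general case needs an induction along the deletion/suppression sequence (or a direct argument spanning each erased blob so that its only degree-$1$ vertices in the backbone are vertices of degree $2$ in $\lcut(N)$).
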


\begin{proof}
There is a spanning tree for the network that consists of the Hamiltonian path plus the additional edges to the leaves from $N\setminus\lcut(N)$ or plus the additional pending subtrees induced by $N\setminus\lcut^{\text{simp}}(N)$, respectively, and this spanning tree has leaves that are all elements of $X$ but possibly in addition one or both of the endpoints of the Hamiltonian path. This completes the proof.
\end{proof}

Note that Proposition \ref{p:lcut.Hamiltonian} nicely extends a result from \cite{fischer2018classes}, which states that if $\lcut(N)$ is Hamilton-connected (i.e. if for all pairs $\{a,b\}$ from the vertex set of $\lcut(N)$ there is a Hamiltonian path from $a$ to $b$), then $N$ is tree-based, i.e. $\ell(N)=0$. 

We next turn our attention to four new proximity measures.

\section{Four new proximity measures}\label{s:new.measure.e}\label{s:3.new.measures}

{In this section we introduce four new proximity measures to being tree-based. The first is similar to the tier-based concept introduced in the previous section but uses the level of the network instead of the tier, the second uses the number of edges one needs to add to a network to make it tree-based, the third uses the number of nearest neighbor interchange (NNI) moves, 
and the fourth counts the excess edges present in a spanning tree. For all four of these measures, we will show that they are really new in the sense that they are not identical to the ones presented in the previous section.}

\subsection{A proximity measure based on the level of the network}

Let $N\in \PN(X)$, and define 
\[\lambda(N):=\min\limits \{{k}\mid \mbox{$N$ is level-${k}$-based}\}.\]
Note that $\lambda(N)$ is 0 precisely if $N$ is tree-based, because trees are the only level-0 networks, so if $\lambda(N)=0$, we know that $N$ has a support tree. 

The following main theorem of this subsection shows that the four measures described in Section~\ref{s:four.equal.measures} provide an upper bound for $\lambda(N)$. 

\begin{thm}\label{t:lambda.le.tau}
If $N$ is a phylogenetic network, then $\lambda(N) \leq \tau(N)$. 
\end{thm}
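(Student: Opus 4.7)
The plan is to argue almost directly from the definitions, exploiting the already-noted fact that $\text{level}(G) \leq \text{tier}(G)$ for any connected graph $G$. The key idea is that a support network realizing $\tau(N)$ will simultaneously serve as a witness for some value of $\lambda(N)$ that is at most $\tau(N)$.

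First I would choose a support network $\widehat{N}$ of $N$ with $\text{tier}(\widehat{N}) = \tau(N)$; such an $\widehat{N}$ exists by the definition of $\tau(N)$ as a minimum. Next I would apply the inequality $\text{level}(\widehat{N}) \leq \text{tier}(\widehat{N})$. Although this is stated in Section~\ref{s:definitions} for phylogenetic networks, the same reasoning applies to an arbitrary connected graph: if $\widehat{N}$ has blobs $B_1, \ldots, B_m$, then by definition $\text{level}(\widehat{N}) = \max_i \text{tier}(B_i)$, while $\text{tier}(\widehat{N}) = \sum_i \text{tier}(B_i)$ since the cyclomatic number is additive over blocks (each edge lies in exactly one blob, and bridges contribute $0$). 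Because each $\text{tier}(B_i) \geq 0$, the maximum is bounded above by the sum.

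Putting these observations together, $\widehat{N}$ is a support network of $N$ whose level is at most $\tau(N)$. Writing $\ell = \text{level}(\widehat{N})$, the existence of $\widehat{N}$ shows that $N$ is level-$\ell$-based, and so by minimality $\lambda(N) \leq \ell \leq \tau(N)$, as required. The only conceptual subtlety worth noting is that $\lambda$ and $\tau$ are defined via minima taken over support networks that in principle need not be the same; here this is harmless because the single support network $\widehat{N}$ simultaneously serves both purposes. I expect no substantive obstacle in executing this plan; the argument is essentially a one-line consequence of the definitions together with the general graph-theoretic inequality between level and tier.
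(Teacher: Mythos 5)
Your proposal is correct and takes essentially the same approach as the paper's proof: both select a support network realizing $\tau(N)$, apply the inequality $level(\widehat{N})\le tier(\widehat{N})$, and conclude $\lambda(N)\le level(\widehat{N})\le \tau(N)$ by minimality. The only differences are cosmetic: you supply an explicit justification of $level\le tier$ (additivity of the cyclomatic number over blobs, with bridges contributing zero), which the paper merely asserts in Section~\ref{s:definitions}, and the paper routes the final step through a second support network $N'$ realizing $\lambda(N)$, whereas you appeal directly to the definition of level-$k$-based --- the same argument in slightly different notation.
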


\begin{proof} 
As noted already in Section~\ref{s:definitions}, the level of a network is bounded above by its tier.

So now let $N$ be a phylogenetic network with support networks $N'$ and $N''$, where $N'= \argmin\limits \{{k}\mid \mbox{$N$ is level-${k}$-based}\}$ and $N''= \argmin\limits \{{k}\mid \mbox{$N$ is tier-${k}$-based}\}$. This implies that $N'$ gives us $\lambda(N)$ and $N''$ gives us $\tau(N)$. Let $level(N')$ and $level(N'')$ denote the levels of $N'$ and $N''$ and $tier(N')$ and $tier(N'')$ denote the tiers of $N'$ and $N''$, respectively.

Then, by the minimality of $N'$, we know that $\lambda(N)=level(N')Ê\leq level (N'')$, because both networks are support networks of $N$. On the other hand, as stated above, we know that $level(N'') \leq tier(N'') = \tau(N)$. So in total, this shows that $\lambda(N) \leq \tau(N)$ and thus completes the proof.
\end{proof}

Observe that the inequality stated in Theorem \ref{t:lambda.le.tau} can indeed be strict for some networks. In fact, $\tau$ and $\lambda$ can be made arbitrarily different, as in Observation~\ref{obs:tau.lambda}.

\begin{observation}\label{obs:tau.lambda}
Consider the network $N$ depicted in Figure \ref{taulambda_fig}. This network basically consists of $m$ non-trivial blobs, each of which is a copy of the simple network presented in Figure \ref{f:level5}(i). As the network from Figure \ref{f:level5}(i) is not tree-based, both the level and the tier of all support networks of each blob in $N$ must be at least 1. In fact, it can easily be seen that both the level and the tier of the support network of each individual blob in $N$ is precisely 1; cf. the highlighted support network in Figure \ref{taulambda_fig} (note that the equality of the level and the tier in this case are due to Proposition \ref{taulambda_simple} below, as the network in Figure \ref{f:level5}(i) is simple). 
However, this implies that $N$, the network from Figure \ref{taulambda_fig}, is level-1-based, while it is also clear that the tier of the support network of $N$ equals the number of such blobs of $N$, which is $m$. So we have $\lambda(N)=1$ and $\tau(N)=m$, which means that the difference $\tau(N)-\lambda(N)=m-1$ can be made arbitrarily large by adding more and more identical blobs. So in fact, for general networks $N$, $\lambda(N)$ and $\tau(N)$ need not be identical, which shows that $\lambda(N)$ is indeed a proximity measure different from the ones introduced in the previous section.

\begin{figure}[ht]
\includegraphics[width=\textwidth]{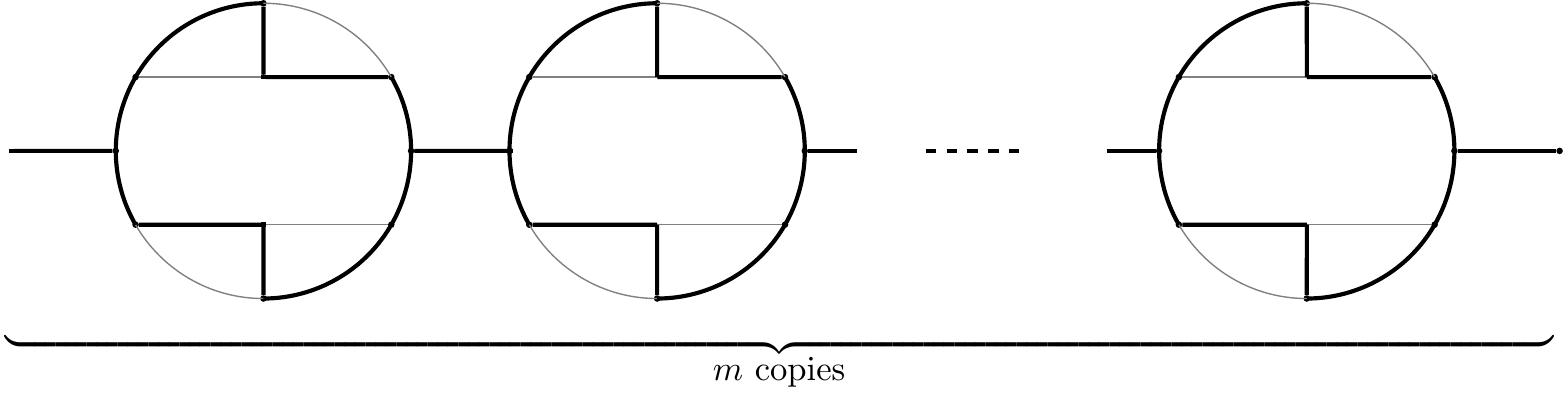}
\caption{A network $N$ consisting of $m$ blobs that are based on the simple non-tree-based network presented in Figure \ref{f:level5}(i). Here, we have $\lambda(N)=1$ and $\tau(N)=m$, which can be seen considering the highlighted support network. In particular, this shows that $\lambda(N)\neq \tau(N)$. }\label{taulambda_fig}
\end{figure}

\end{observation}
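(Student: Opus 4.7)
The plan is to establish both $\lambda(N)=1$ and $\tau(N)=m$ by decomposing any support network of $N$ blob-by-blob.

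First, I would analyse one blob $B_i$ (a copy of the network from Figure~\ref{f:level5}(i)) in isolation. Since $B_i$ is not tree-based, no spanning subgraph of $B_i$ with the correct leaf set can be a tree, so every support subnetwork of $B_i$ contains at least one cycle and therefore has tier $\geq 1$. Because the level of any (connected) graph is bounded above by its tier, this also forces level $\geq 1$. Conversely, the bold subnetwork highlighted inside one blob in Figure~\ref{taulambda_fig} is an explicit support subnetwork of $B_i$ with just one extra edge beyond a spanning tree, hence tier exactly $1$; since $B_i$ is simple, the forthcoming Proposition~\ref{taulambda_simple} then yields level $=$ tier $=1$ for this subnetwork.

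Second, I would exploit the cut-edge structure of $N$ to transfer these per-blob bounds to $N$ itself. If $\widehat{N}$ is any support network of $N$, connectedness of $\widehat{N}$ forces it to contain every cut edge of $N$, and each such edge remains a cut edge of $\widehat{N}$. Consequently, the blobs of $\widehat{N}$ are precisely the blobs of the restrictions $\widehat{B_i}:=\widehat{N}\cap B_i$ (for $i=1,\dots,m$), and each $\widehat{B_i}$ is itself a support subnetwork of $B_i$. Since cycles cannot traverse cut edges, the cyclomatic number is additive across this decomposition, so
\[
tier(\widehat{N})=\sum_{i=1}^{m}tier(\widehat{B_i}),\qquad level(\widehat{N})=\max_{i=1}^{m}level(\widehat{B_i}).
\]

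Combining Step~1 and Step~2, minimisation over $\widehat{N}$ gives
\[
\tau(N)=\min_{\widehat{N}}\sum_{i}tier(\widehat{B_i})\geq m,\qquad \lambda(N)=\min_{\widehat{N}}\max_{i}level(\widehat{B_i})\geq 1,
\]
with equality realised in both cases by the support network highlighted in Figure~\ref{taulambda_fig}, which picks the tier-$1$/level-$1$ subnetwork inside each blob. Hence $\tau(N)=m$ and $\lambda(N)=1$, so the gap $\tau(N)-\lambda(N)=m-1$ can be made arbitrarily large. The main obstacle is the bookkeeping in Step~2: one must verify that the blobs of $\widehat{N}$ really do refine along the blob decomposition of $N$ and that each $\widehat{B_i}$ inherits the correct leaf set to count as a support subnetwork of $B_i$. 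These facts are direct consequences of the cut-edge structure and the connectedness requirement on support networks, but they should be written out carefully so that the additivity of tier and the maximum behaviour of level follow unambiguously.
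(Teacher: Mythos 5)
Your proposal follows essentially the same route as the paper's own (quite informal) argument: per-blob lower bounds coming from the non-tree-basedness of the copy of the Figure~\ref{f:level5}(i) network, together with the explicitly highlighted support network of Figure~\ref{taulambda_fig} realising the upper bounds $\tau(N)\le m$ and $\lambda(N)\le 1$. What you add is the justification that the paper dismisses with ``it is also clear that'': namely that any support network $\widehat{N}$ of $N$ must contain every cut edge of $N$, that its restrictions $\widehat{B_i}$ to the blobs (taken together with incident cut edges) are connected support networks of the individual copies, and that the tier is additive while the level is the maximum across this decomposition. That bookkeeping is sound --- a path between two vertices of a blob cannot cross a bridge, so each $\widehat{B_i}$ is connected, and an internal vertex of a copy has the same degree in the restriction as in $\widehat{N}$, so the restricted leaf sets come out right --- and making it explicit is a genuine improvement on the paper's one-line assertion.

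One small repair is needed. Your justification of level $\ge 1$ for a support subnetwork of a blob runs the inequality in the wrong direction: $level(\cdot)\le tier(\cdot)$ lets you deduce $tier\ge 1$ \emph{from} $level\ge 1$, not the converse, so ``tier $\ge 1$, hence level $\ge 1$ by the bound'' is invalid as written. The fix is immediate: a support network of a non-tree-based network cannot be a tree, hence contains a cycle, hence has a nontrivial blob of tier at least $1$, so its level is at least $1$ (equivalently, a graph has level $0$ if and only if it is acyclic, if and only if it has tier $0$). With that sentence replaced, your argument is complete and, if anything, more rigorous than the paper's observation.
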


The fact that $\lambda$ and $\tau$ are closely related but generally not equal immediately gives rise to the question of whether there are types of networks for which we can guarantee equality. We end this section with the following proposition, which shows that indeed equality holds for simple networks \blue{(those with at most one non-trivial blob)}.

\begin{prop}\label{taulambda_simple}
If $N$ is a simple phylogenetic network, then $\lambda(N)=\tau(N)$.
\end{prop}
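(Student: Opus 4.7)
The plan is to reduce the desired equality to the already-known inequality $\lambda(N)\le\tau(N)$ from Theorem~\ref{t:lambda.le.tau} by establishing the reverse inequality $\tau(N)\le\lambda(N)$. The key idea is that, for any connected graph $G$, the tier and the level coincide whenever $G$ has at most one non-trivial blob, simply because trivial blobs contribute $0$ to the tier, and the tier of a connected graph decomposes as the sum of the tiers of its blobs (each blob contributes its own cycle-space dimension). So if I can show that every support network of a simple $N$ is itself simple, then tier and level agree on every such support network, and minimising either quantity over the set of support networks yields the same value.

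First, I would verify that if $N$ is simple then every support network $\widehat{N}$ of $N$ is also simple. To see this, I would argue that any blob of $\widehat N$ is contained in a blob of $N$: two vertices that lie on a common cycle in $\widehat N$ necessarily lie on a common cycle in $N$ (since $\widehat N$ is a subgraph), hence they are in the same blob of $N$. Consequently, any non-trivial blob of $\widehat N$ sits inside a non-trivial blob of $N$; since $N$ has at most one non-trivial blob, so does $\widehat N$.

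Next, I would record the general fact that for a connected graph $G$, $\mathrm{tier}(G)=\sum_{B} \mathrm{tier}(B)$, where the sum runs over the blobs $B$ of $G$. Combined with the definition $\mathrm{level}(G)=\max_B \mathrm{tier}(B)$, this immediately gives $\mathrm{tier}(G)=\mathrm{level}(G)$ whenever $G$ has at most one non-trivial blob (all other blobs being isolated vertices of tier $0$). Applying this to $\widehat N$ yields $\mathrm{tier}(\widehat N)=\mathrm{level}(\widehat N)$ for every support network $\widehat N$ of $N$.

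Finally, to conclude, let $\widehat N^*$ be a support network of $N$ achieving $\mathrm{level}(\widehat N^*)=\lambda(N)$. By the above, $\mathrm{tier}(\widehat N^*)=\mathrm{level}(\widehat N^*)=\lambda(N)$, so $\tau(N)\le \mathrm{tier}(\widehat N^*)=\lambda(N)$. Together with Theorem~\ref{t:lambda.le.tau}, this gives $\lambda(N)=\tau(N)$. The main (minor) obstacle I anticipate is the bookkeeping for the claim that any blob of a subgraph lies inside a blob of the ambient graph, and making precise the tier-decomposition identity across blobs; both are standard, but should be stated carefully so that trivial blobs and cut vertices shared between blobs in the non-binary setting are handled correctly.
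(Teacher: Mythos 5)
Your strategy is exactly the paper's: establish the reverse inequality $\tau(N)\le\lambda(N)$ by arguing that tier and level coincide on support networks of a simple network, then combine with Theorem~\ref{t:lambda.le.tau}. Your supporting identity $\mathrm{tier}(G)=\sum_B \mathrm{tier}(B)$ is correct under the paper's blob definition: maximal connected subgraphs without cut edges are pairwise vertex-disjoint (the union of two such subgraphs sharing a vertex again has no cut edge, contradicting maximality), contracting them leaves a tree of cut edges, and the cycle ranks then add up. So ``$\mathrm{tier}=\mathrm{level}$ for graphs with at most one non-trivial blob'' is fine, as is the observation that every blob of $\widehat{N}$ is contained in a blob of $N$ (a connected subgraph of $\widehat{N}$ with no cut edge is one of $N$ too; you do not even need the cycle argument, which is delicate in the non-binary setting where blobs contain cut vertices).

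The genuine gap is the step ``any non-trivial blob of $\widehat{N}$ sits inside a non-trivial blob of $N$; since $N$ has at most one non-trivial blob, so does $\widehat{N}$.'' This is a non sequitur: \emph{two distinct} non-trivial blobs of $\widehat{N}$ can both sit inside the single non-trivial blob of $N$, because deleting an edge from a $2$-edge-connected graph can split it into two $2$-edge-connected pieces joined by a bridge. Concretely, let the unique blob of $N$ consist of two triangles $C_1,C_2$ joined by two vertex-disjoint edges $e_1,e_2$, with a leaf attached to the remaining degree-$2$ vertex of each triangle to make $N$ proper; then $\widehat{N}=N-e_2$ is a legitimate support network (no new degree-$1$ vertices arise, all vertices are covered), yet it has two non-trivial blobs $C_1$ and $C_2$, with $e_1$ now a cut edge. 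So your stated lemma --- that \emph{every} support network of a simple network is simple --- is false, and your final step, applying $\mathrm{tier}=\mathrm{level}$ to the level-minimizing support network $\widehat{N}^*$, is unjustified as written: what you would need is that \emph{some} level-minimizing support network is simple, and your argument never invokes minimality. It is worth noting that the paper's own proof makes the identical move, asserting without justification that its two minimizing support networks $N'$ and $N''$ inherit simplicity from $N$; so your reconstruction faithfully reproduces the published argument, including its weakest point, rather than closing it --- a complete proof would need an additional argument (for instance, showing that a minimal support network with two or more non-trivial blobs can always be improved or replaced by a simple one), which neither you nor the paper supplies.
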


\begin{proof} Note that in simple networks $N$, by definition the level of $N$ equals the tier of $N$.

So now let $N$ be a simple network with support networks $N'$ and $N''$, where $N'= \argmin\limits \{{k}\mid \mbox{$N$ is level-${k}$-based}\}$ and $N''= \argmin\limits \{{k}\mid \mbox{$N$ is tier-${k}$-based}\}$. This implies that $N'$ gives us $\lambda(N)$ and $N''$ gives us $\tau(N)$. Let $level(N')$ and $level(N'')$ denote the levels of $N'$ and $N''$ and $tier(N')$ and $tier(N'')$ denote the tiers of $N'$ and $N''$, respectively. Note that as $N$ was simple, so are $N'$ and $N''$, which -- by the above observation -- implies $level(N')=tier(N')$ and $level(N'')=tier(N'')$. Moreover, by the minimality of $N''$, we can conclude $\tau(N)=tier(N'') \leq tier(N')=level(N')=\lambda(N)$. So $\tau(N) \leq \lambda(N)$. Moreover, by Theorem \ref{t:lambda.le.tau} we already know that $\lambda(N) \leq \tau(N)$. This completes the proof.

\end{proof}

\blue{Note that the reverse implication does not hold, by the following argument. 
If $N$ satisfies $\lambda(N)=\tau(N)=k$, then there is a tier $k$ support network $N'$ of $N$.  Thus, the levels of the blobs in $N'$ add up to $k$, and so if $N'$ has more than one blob, the maximum level of any blob in $N'$ must be strictly less than $k$.  But then the level of $N'$ would also be $<k$, meaning that $N'$ is a support network of $N$ whose level is $<k$: a contradiction to $\lambda(N)=k$.  Thus $N'$ has only one blob, and so is itself simple.  However, this does not imply that $N$ is simple.  For instance, $N$ could have many blobs, all but one of which are individually tree-based.  An example might be a network that has one non-tree-based blob of level $\ge 4$, and the other blobs all of level $\le 3$.  Networks (and therefore blobs) of level $\le 3$ are all tree-based (by~\cite{fischer2018nonbinary}), so this network $N$ will have level and tier different, but its minimal support networks will have the same tier and level because they will all be simple.
}

We next turn our attention to another new proximity measure, which can be shown to be bounded by the ones introduced in Section \ref{s:four.equal.measures}.

\subsection{Proximity measure based on adding edges to the network}

Before we can introduce the next measure, recall the method for adding internal edges to a phylogenetic network described in Section~\ref{s:definitions} by choosing two edges, subdividing them with a new vertex each and connecting these two new vertices with an extra edge. 
This procedure does not create any additional leaves and thus also no edges adjacent to leaves, but only internal edges. This immediately leads to the following Lemma. 

\begin{lem}\label{l:add.edges}
If $N\in \PN(X)$
then there exists an 
$N'\in\TBN(X)$
which can be derived from $N$ by adding internal edges. \end{lem}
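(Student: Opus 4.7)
My plan is to prove existence of $N'$ constructively, by converting a leaf-addition recipe (which we know exists, by Theorem~\ref{t:equalities}) into an internal-edge-addition recipe via pairing up the added leaves.

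By Theorem~\ref{t:equalities}, $t(N)=\ell(N)$ is finite, so $N$ can be made tree-based by adjoining some $t=t(N)$ leaves $y_1,\dots,y_t$, each pendant to a subdivision vertex $w_j$ of some edge $e_j$ of $N$. Fix such a resulting network $N^*$ and a support tree $T^*$. If $t=0$, take $N':=N$. Otherwise, after adjoining one extra pendant leaf to $N^*$ if necessary (which preserves tree-basedness, as $T^*$ extends across the new pendant edge), we may assume $t$ is even. Pair the added leaves as $(y_{2i-1},y_{2i})$ for $i=1,\dots,t/2$, and for each pair, replace the two pendant-leaf additions with a single internal-edge addition: subdivide $e_{2i-1}$ and $e_{2i}$ with new vertices $w_{2i-1},w_{2i}$, and insert the edge $\{w_{2i-1},w_{2i}\}$. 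The resulting network $N'$ is then obtained from $N$ by internal-edge additions alone and retains leaf set $X$.

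To see $N'\in\TBN(X)$, I would construct a support tree $T'$ of $N'$ from $T^*$ by deleting each pendant edge $\{w_j,y_j\}$ together with $y_j$, inserting the new internal edges $\{w_{2i-1},w_{2i}\}$, and breaking each of the $t/2$ resulting cycles (each formed by a new edge together with the $T^*$-path between its endpoints) by removing one tree edge per cycle. An edge count confirms the resulting subgraph has exactly $|V(N')|-1$ edges, and a degree count confirms that, after these modifications, each $w_j$ still has degree at least $2$ in $T'$ (its degree in $T^*$ drops by $1$ from deleting $\{w_j,y_j\}$ and rises by $1$ from the new internal edge).

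The main obstacle is the cycle-breaking step: removing a path edge incident to a degree-$2$ vertex would create a new extra leaf. I would handle this by choosing the pairing of the $y_j$'s so that each cycle-path contains a vertex of $T^*$-degree at least $3$ (whose incident path-edge can then safely be removed), or, failing that, iterating the procedure by adding further internal edges to absorb any residual extras introduced by cycle-breaking. Because the number of extras is a finite non-negative integer that strictly decreases at each successful pass, the procedure terminates with an $N'\in\TBN(X)$ obtained from $N$ by internal-edge additions only.
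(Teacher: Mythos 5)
There is a genuine gap, and it sits exactly where you flag it: the cycle-breaking step. After you delete the pendant edges $\{w_j,y_j\}$ and insert the chords $\{w_{2i-1},w_{2i}\}$, you must remove one edge from each resulting cycle, and removing an edge $\{a,b\}$ creates a new non-$X$ leaf whenever \emph{either} endpoint has degree $2$ in the current subgraph. Your proposed remedy --- pick the pairing so that each cycle contains a vertex of degree $\ge 3$ and remove an edge incident to it --- does not work even locally, since the \emph{other} endpoint of that edge may have degree $2$ and become an extra leaf; what you would need is an edge with \emph{both} endpoints of degree $\ge 3$ on the cycle, and nothing in your construction guarantees such an edge exists (a cycle can have exactly one high-degree vertex, in which case every break creates an extra). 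Note also that the chords need not induce edge-disjoint fundamental cycles in $T^*$, so the breaks interact. Your fallback is then an iteration whose termination rests on the claim that the number of extras ``strictly decreases at each successful pass'', but this is asserted, not proved: a pass starting from one extra (padded to two for parity) adds one chord, creates one cycle, and an unavoidable bad break returns you to one extra --- the count can stall at $1$ indefinitely. A further flag: if your pairing scheme worked as stated, it would prove $e(N)\le\lceil \ell(N)/2\rceil$, i.e.\ the lower bound of Theorem~\ref{t:e(N)} would always be attained up to rounding; the paper only proves $e(N)\le\ell(N)$ and shows that bound is attained (Figure~\ref{f:level5}), so you should be suspicious that a short argument yields the stronger statement. (Separately, your construction can ask for a chord between two subdivision points of the \emph{same} edge of $N$, producing a parallel edge; this case needs excluding.)

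The paper's proof avoids all of this by spending one internal edge per extra leaf rather than per pair, with an explicit safe choice: given an extra leaf $v$ of the spanning tree $T$, it subdivides an edge of $N$ incident to $v$ but \emph{not} in $T$ (new vertex $v'$), subdivides a $T$-edge incident to a vertex $w$ of $T$-degree $\ge 3$ on the side away from $v$ (new vertex $w'$), adds $\{v',w'\}$, and performs the swap: delete $\{w,w'\}$, add $\{v,v'\}$ and $\{v',w'\}$. Here the cycle-break is the deletion of $\{w,w'\}$, and it is safe by construction: $w$ had degree $\ge 3$ so stays $\ge 2$, and $w'$ retains degree $2$ via $\{w',v'\}$; meanwhile $v$ gains an edge and stops being a leaf. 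Each step provably reduces the number of non-$X$ leaves by one, so the induction genuinely terminates --- this explicit degree control is the missing idea in your proposal.
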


\begin{proof} Recall that in this manuscript, we only consider proper networks, and that this implies that $N$ either consists of only one vertex (in which case it is trivially tree-based, so there is nothing to show), or has at least two leaves, cf. Section \ref{s:definitions}. So let us assume $N$ is proper and has at least two leaves, but is not tree-based. Then it has a spanning tree with at least one leaf not from $X$. For each such leaf, we claim that one can add an internal edge to the network that allows the creation of a spanning tree with one fewer leaf that is not in $X$.

The method is as follows. Take a spanning tree $T$ for the network and choose a non-$X$ leaf $v$ from $T$ \blue{(see Figure~\ref{f:lev4.edge.add}$(i)$)}.  The degree of $v$ is at least 3 in $N$, so there are at least two edges connected to $v$ in $N$ that are not in $T$. Choose one of them and subdivide it, adding a vertex $v'$.  Now, there is at least one vertex in $T$ of degree at least 3, say $w$, because $T$ has at least 3 leaves (because $N$ has at least 2 and $T$ by assumption has at least one more). Choose an edge incident to $w$ that is on a path that does not lead to $v$ in $T$ (this is always possible because of the at least three edges incident to $w$, only one can be on the unique path from $v$ to $w$ in $T$), and subdivide it, creating a new vertex $w'$.  Now add the edge $e=\{v',w'\}$ to the network \blue{(Figure~\ref{f:lev4.edge.add}$(ii)$)}. Then $T$ can be modified to build a new spanning tree $T'$ by deleting the edge $\{w,w'\}$ and adding the edges $\{v,v'\}$ and $\{v',w'\}$ \blue{(as in Figure~\ref{f:lev4.edge.add}$(iii)$)}. We repeat this procedure until the resulting network $N'$ is tree-based.
\begin{figure}[ht]
\includegraphics{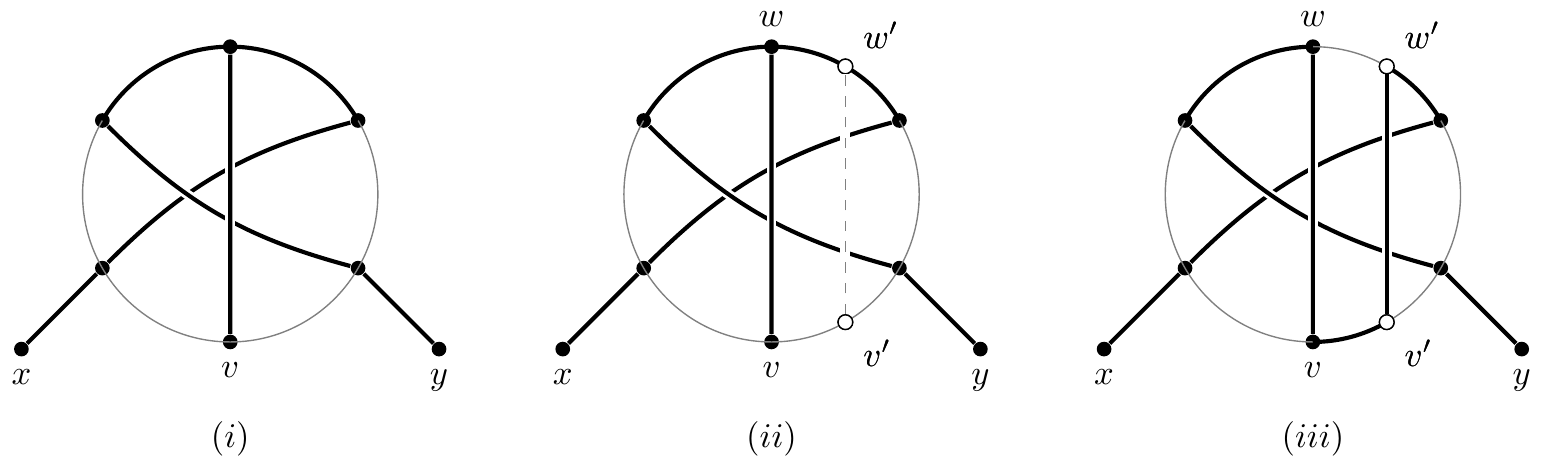}
\caption{\blue{The process used in the proof of Lemma~\ref{l:add.edges} to show that a network can be made tree-based with the addition of internal edges.  This network on $X=\{x,y\}$ is an example of a level 4 network that is not tree-based, and is taken from~\cite[Fig. 9]{fischer2018nonbinary}.  $(i)$ The network with a spanning tree shown in bold, and has a leaf $v\not\in X$.  $(ii)$ Labelling of vertices, as in the proof of Lemma~\ref{l:add.edges}, and an additional internal edge $\{v',w'\}$. $(iii)$ Constructing a new spanning tree by deleting the edge $\{w,w'\}$ and adding the path $\{v,v'\},\{v',w'\}$, which has one fewer leaves. }}
\label{f:lev4.edge.add}
\end{figure}
\end{proof}

We can use Lemma \ref{l:add.edges} to define another proximity measure as follows.
\begin{defn}
For a phylogenetic network $N$, let $e(N)=$ the minimal number of additional edges that need to be added to $N$ to make it tree-based.  
\end{defn}
Note that $e(N)$ is well defined by Lemma~\ref{l:add.edges}, and also note that $e(N)=0$ precisely when $N$ is tree-based. 

Interestingly, $e(N)$ is not always equal to $\ell(N)$ and the other measures in Theorem~\ref{t:equalities}. For an example, see the network in Figure~\ref{f:double.level.5}, which has $\ell(N)=2$ but $e(N)=1$.

We now state our main result about the measure $e(N)$, which provides bounds in the value of $e(N)$ with respect to the measures given by Theorem~\ref{t:equalities}. 

\begin{thm}\label{t:e(N)}
For any $N\in \PN(X)$,
$\frac{1}{2}\ell(N) \le e(N) \le \ell(N)$.
\end{thm}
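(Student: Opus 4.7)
I would prove the two inequalities separately. The upper bound $e(N)\le \ell(N)$ should come essentially for free from the existing machinery: the construction used in the proof of Lemma~\ref{l:add.edges} shows how, from any spanning tree of $N$, one can add a single internal edge and simultaneously reduce the number of non-$X$ leaves of the spanning tree by exactly one. Starting from a spanning tree realising $\ell(N)$ and iterating, $\ell(N)$ edge additions suffice to make $N$ tree-based, giving $e(N)\le \ell(N)$.

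For the lower bound $\ell(N)\le 2\,e(N)$ my plan is to induct on $e(N)$, reducing to the following key estimate: if $N^\ast$ is obtained from $N$ by adding a single internal edge, then $\ell(N)\le \ell(N^\ast)+2$. Given this, any sequence of networks $N=N_0,N_1,\dots,N_{e(N)}$ realising $e(N)$ (so that $N_{e(N)}$ is tree-based and hence $\ell(N_{e(N)})=0$) delivers $\ell(N)\le 2\,e(N)$ after $e(N)$ applications of the estimate.

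To prove the key estimate I would take a spanning tree $T^\ast$ of $N^\ast$ realising $\ell(N^\ast)$, write the added edge as $e=\{v',w'\}$ with $v'$ subdividing an edge $\{u,z\}\in E(N)$ and $w'$ subdividing an edge $\{a,b\}\in E(N)$, and construct a spanning tree $T$ of $N$ by cases on whether $e\in T^\ast$. Neither $v'$ nor $w'$ is a leaf of $T^\ast$, since both lie outside $X$ and both have $N^\ast$-degree $3$. If $e\notin T^\ast$, then $v'$ and $w'$ each have $T^\ast$-degree exactly $2$, and suppressing them restores the underlying $N$-edges, producing $T$ with the same non-$X$ leaves as $T^\ast$. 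If $e\in T^\ast$, I would delete $e$ (splitting $T^\ast$ into two components) and then process each of $v',w'$ by suppressing it if it still has degree $2$, or deleting it with its remaining subdivision edge if it has degree $1$. Each of these operations removes exactly one edge, so the resulting forest $F$ on $V(N)$ has $|V(N)|-2$ edges and thus exactly two components, and by connectedness of $N$ there is an $N$-edge bridging the two components; adding it completes $T$.

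The main obstacle is the leaf-count bookkeeping in the $e\in T^\ast$ case, which is exactly where the factor of $2$ appears. Suppression of a degree-$2$ subdivision vertex preserves the degrees of all $V(N)$-vertices and so introduces no new leaves; deletion of a degree-$1$ subdivision vertex decreases the degree of its unique $N$-neighbour by one, creating at most one new non-$X$ leaf (only when that neighbour had $T^\ast$-degree $2$ and lies outside $X$); and adding the bridge edge can only decrease the leaf count. Summing contributions from $v'$ and $w'$ gives at most two new non-$X$ leaves, so $\ell(N)\le \ell(N^\ast)+2$. A subtle corner to double-check is the case where the neighbour of a deleted subdivision vertex becomes isolated in $F$ and the bridge edge is therefore forced to use it; a direct inspection shows the $+2$ bound still survives, completing the argument.
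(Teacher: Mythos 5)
Your overall strategy for the lower bound is genuinely different from the paper's. The paper works globally: it takes the tree-based network $N'$ obtained by adding all $e(N)$ edges at once, argues that a support tree $T$ of $N'$ must contain every added edge, deletes them to obtain a forest of $e(N)+1$ components, classifies the ``connection points'' of the subdivided edges into path points and terminals, bounds the number of terminals that become leaves by $2e(N)$, and reconnects the forest inside $N$. Your telescoping induction, via the local estimate $\ell(N)\le \ell(N^\ast)+2$ for a single internal edge addition, replaces all of that with a one-edge surgery on an optimal spanning tree of $N^\ast$, and it composes correctly along a minimal sequence $N=N_0,\dots,N_{e(N)}$ (such a sequence exists since edge addition is defined one edge at a time, and $\ell(N_{e(N)})=0$). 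This is arguably cleaner: it avoids the paper's minimality argument that all new edges lie in $T$, and the terminal bookkeeping is localized to two subdivision vertices per step. Your upper bound argument coincides with the paper's, which likewise just invokes the construction in Lemma~\ref{l:add.edges}.

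There is, however, one genuine gap: the claim that ``neither $v'$ nor $w'$ is a leaf of $T^\ast$, since both lie outside $X$ and both have $N^\ast$-degree $3$'' is unjustified --- having degree $3$ in $N^\ast$ in no way prevents a vertex from being a leaf of a spanning tree (non-$X$ leaves of spanning trees are exactly what $\ell$ counts, and nothing forces an optimal $T^\ast$ to avoid placing one at a subdivision vertex). This claim is load-bearing for your case analysis: if $e\notin T^\ast$ and $v'$ has $T^\ast$-degree $1$, suppression is not available; and if $e\in T^\ast$ and $v'$ has degree $0$ after deleting $e$ (i.e.\ $v'$ was a leaf attached via $e$), your edge count $|V(N)|-2$ and the ``exactly two components'' assertion both fail, since processing then removes fewer edges and can leave a single component needing no bridge. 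Fortunately the gap is repairable within your own framework: in each of these corners $v'$ is itself a non-$X$ leaf of $T^\ast$, so deleting it together with its pendant edge (if any) removes one counted leaf while creating at most one new non-$X$ leaf at its neighbour, and the net contribution to the leaf count is $\le 0$ --- strictly better than the $+1$ per subdivision vertex you budget for. So the estimate $\ell(N)\le\ell(N^\ast)+2$, and hence $\frac{1}{2}\ell(N)\le e(N)$, survives, but the proof as written asserts a false intermediate claim and omits these subcases; they need to be added explicitly, alongside the isolated-neighbour corner you did flag.
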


\begin{proof}
The second inequality follows directly from the proof of Lemma~\ref{l:add.edges}.

For the first inequality, let $N'$ be a tree-based network derived from $N$ by inserting $e(N)$ extra edges, and let $T$ be a support tree of $N'$. 

Observe that $T$ must contain all of the $e(N)$ newly inserted edges, because otherwise if one was not needed, it could be removed from $N'$ without affecting the spanning tree $T$, violating minimality of $e(N)$.
Consequently, if we delete these $e(N)$ edges, $T$ turns into a forest $F$ of $e(N)+1$ components, such that each vertex of $N$ is contained in one of the components of $F$. 

Note that each of the deleted $e(N)$ edges has at most 4 ``connection points'' in $T$, where a connection point is a node adjacent in $N$ to an endpoint of one of the $e(N)$ edges. That is, the connection points are endpoints of edges in $N$ that were subdivided to build $N'$.  If {\it both} vertices of an edge of $N$ that has been subdivided to insert one of the $e(N)$ new edges are connection points in $T$, we call both of them {\it path points}. All connection points that are not path points are called {\it terminals}. 

This implies that we have a forest $F$ covering all nodes of $N$ and in total, we have at most $2 e(N)$ terminals that are leaves in $F$.  This is because when $e$ is removed from $N'$, at most two of the terminals become leaves in $F$ (see the cases in Figure~\ref{f:4endpoints}).  This can happen in Case (i) of the figure, where the top and bottom connection points may possibly become leaves when $e$ is removed (along with the half-edges from the subdivided edges that $e$ was attached to), and in Case (ii) where at most the top connection point could become a leaf when $e$ is removed. 

\begin{figure}[ht]
\includegraphics[width=\textwidth]{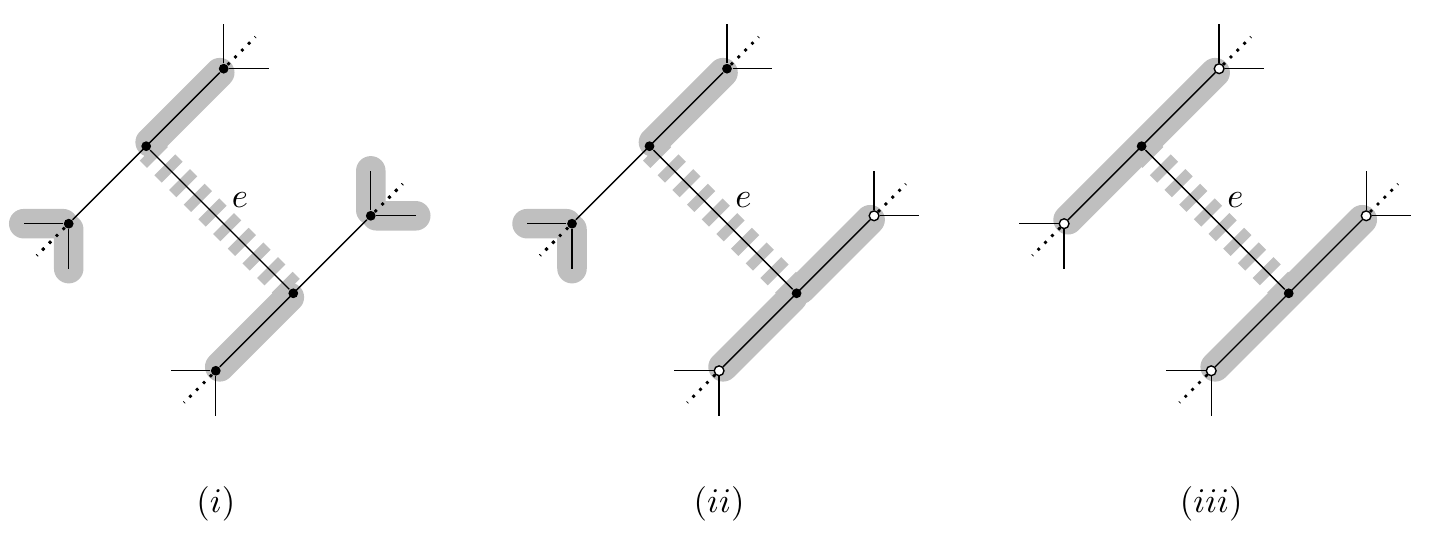} 
\caption{Cases for the spanning tree $T$ in $N'$ using an additional edge $e$, used in the proof of Theorem~\ref{t:e(N)}.  Spanning trees are shown in grey, and path points are indicated by white circles. Case (i) has four terminals and no path points, case (ii) has two terminals and two path points, and case (iii) has no terminals and four path points.}\label{f:4endpoints}
\end{figure}

Note that all trees in $F$ can also be connected via edges in $N$ (as $N$ is connected). So if we now delete the $e(N)$ extra edges from $T$ and use other edges of $N$ to connect the trees of $F$ in $N$, all leaves of the resulting tree $T'$ are either also leaves in $T$ or terminals. As $N'$ is tree-based, all leaves of $T$ are contained in $X$. Therefore, the number of leaves in $T'$ that are not in $X$ is at most the number of terminals, which is at most $2e(N)$. Therefore, $\ell(N)\le 2e(N)$ and thus $\frac{1}{2}\ell(N) \le e(N)$. This completes the proof.
\end{proof}

Note that the bounds given by Theorem \ref{t:e(N)} are tight: examples of reaching the bounds are Figure~\ref{f:double.level.5} for the lower bound, and Figure~\ref{f:level5}(i) for the upper bound. In the latter case, as the network is not tree-based, it can easily be seen that $e(N)=1$ by considering Figure~\ref{f:level5}(ii). Here, this is also equal to $\ell(N)$, which can be seen by connecting one internal endpoint of the paths in Figure~\ref{f:level5}(i) with the adjacent attachment point of a leaf -- this will turn the two depicted paths into a support tree of $N$ with one extra leaf, which is minimal for a non-tree-based network. Therefore, for the network in Figure~\ref{f:level5}(i) we have $e(N)=\ell(N)=1$. 

It should also be noted that Theorem \ref{t:e(N)} connects the intuitively related measures $t(N)$ and $e(N)$ (remembering that $t(N)=\ell(N)$ by Theorem~\ref{t:equalities}): $t(N)$ being the minimum number of leaves one needs to add to make $N$ tree-based, and $e(N)$ the minimum number of edges. 

Moreover, observe that while both $\lambda(N)$ and $e(N)$ are less than $\ell(N)$ (Theorems~\ref{t:e(N)} and~\ref{t:lambda.le.tau}), it is also the case that $\lambda(N)$ and $e(N)$ are not equal in general.  This is essentially because $\lambda(N)$ can be kept small while $e(N)\ge \frac{1}{2}\ell(N)$.  A good example is shown in Figure~\ref{taulambda_fig}, in which we see $\lambda(N)=1$ as long as $m\geq 1$, but $e(N)\ge\frac{m}{2}$ (since $\ell(N)=m$ in this case).

\begin{figure}[ht]
\includegraphics[width=\textwidth]{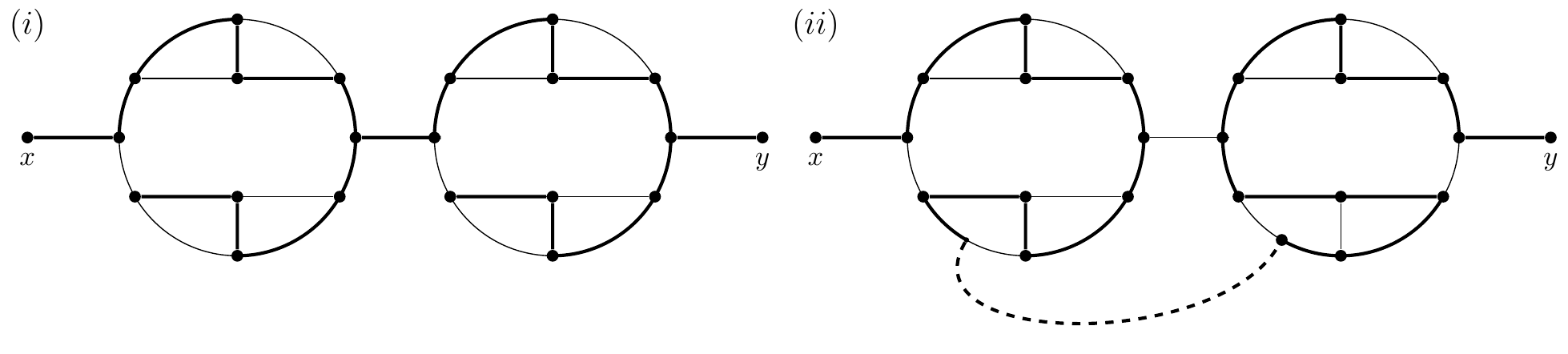}
\caption{$(i)$ A level 5 phylogenetic network $N$ that has $\ell(N)=2$, so that any spanning tree of $N$ has at least two leaves not in $X$; an example spanning tree is shown in bold.  $(ii)$ The same network $N$ with a single additional edge (shown dashed) that makes the network tree-based, showing $e(N)=1$.  A support tree is shown in bold.  }\label{f:double.level.5}
\end{figure}

\subsection{Two more proximity measures linked to the NNI and additional edges} 

We are now in the position to introduce the final two proximity measures and to relate them both to one another as well as the previously introduced measures. One of these new measures, $\delta_{\NNI}$, which is based on NNI moves, is possibly the most interesting one, as already explained in Section \ref{s:intro}. 
Recall the definition of NNI moves from Section~\ref{s:definitions}.

For $N\in \PN(X)$ we define \[\delta_{\NNI}(N)=\min\{d_{\NNI}(N,N')\mid N'\in\TBN(X)\}.\]
Note that because NNI moves do not change the tier (as NNI moves by definition do not change the number of edges of the network), this minimum will be attained for a network $N'$  with $tier(N')=tier(N)$.

A final new proximity measure uses additional edges in spanning trees of the network.

In any spanning tree $T$ of $N$, there is a unique path between any two leaves. If, \blue{for each pair of leaves in $X$,} we consider the union of such paths between leaves, we have a subtree $T_X$ of $T$ that contains the elements of $X$ as leaves, and no other leaves {(note that $T_X$ is the minimum spanning tree of $X$ in $T$)}.  The remaining edges of $T$, namely $E(T)\setminus E(T_X)$, form another set of subtrees of $T$ that we will call ``pendant trees'', each of which has leaves from $N$ plus one leaf that is a vertex from $T_X$ (the attachment point to $T_X$).  

Define a distance to tree-based as follows:
\[m(N):=\min\{|E(T)\setminus E(T_X)|\mid T \text{ a spanning tree of }N\}. 
\]  

This measure has a connection to the NNI distance, and is also greater than or equal to the four proximity measures introduced in Section~\ref{s:four.equal.measures}, as the following result shows.

\begin{thm}\label{t:m(N).is.bigger}
Let $N\in \PN(X)$. Then, we have: 
\begin{enumerate}
	\item $\delta_{\NNI}(N)\le m(N)$, and
	\item $\ell(N)\le m(N)$.
\end{enumerate}
\end{thm}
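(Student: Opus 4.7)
The plan is to handle the two inequalities separately: inequality (2) admits a direct counting argument, whereas (1) requires constructing an explicit sequence of NNI moves and is the more delicate of the two.

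For (2), I will show that every spanning tree $T$ of $N$ satisfies $|V^1(T) \setminus X| \le |E(T) \setminus E(T_X)|$; taking the minimum over $T$ then yields $\ell(N) \le m(N)$. The key observation is that any extra leaf $v \in V^1(T) \setminus X$ cannot lie in $V(T_X)$: since $\deg_T(v) = 1$ and $T_X \subseteq T$, if $v$ were a vertex of $T_X$ it would be a leaf of $T_X$ (as $T_X$ is a connected tree on at least $|X| \ge 2$ vertices), forcing $v \in X$, a contradiction. Hence the unique $T$-edge incident to $v$ lies in $E(T) \setminus E(T_X)$, and the assignment of $v$ to its incident edge is injective (since $T$ cannot consist of two adjacent leaves while also spanning the rest of $V(N)$).

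For (1), my plan is induction on $m(N)$. The base case $m(N) = 0$ is immediate: a spanning tree $T$ with $E(T) = E(T_X)$ has leaf set contained in $X$, so $T$ is a support tree of $N$ and $\delta_{\NNI}(N) = 0$. For the inductive step with $m(N) = k \ge 1$, I fix a spanning tree $T$ realising $m(N)$ and choose a pendant edge $e = \{u,v\} \in E(T) \setminus E(T_X)$ with $v$ a leaf of $T$ (such $e$ exists because each pendant tree of $T$ has a leaf, necessarily an extra leaf of $T$). Since $v \notin X$ and $N$ has no degree-$2$ vertices, $v$ has $N$-degree at least $3$, so it admits a non-tree neighbour $w$; extending to a $4$-path $u,v,w,z$ and performing the NNI move produces a network $N^{*}$ in which the pendant edge has been ``folded into'' the spanning tree. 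Explicitly, $E(T^{*}) := (E(T) \setminus \{\{u,v\}\}) \cup \{\{v,z\}\}$ gives a spanning tree of $N^{*}$ in which $v$ is no longer a leaf, so $m(N^{*}) \le k - 1$; by the inductive hypothesis, $\delta_{\NNI}(N) \le 1 + \delta_{\NNI}(N^{*}) \le k$.

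The principal obstacle is verifying that a suitable $4$-path $u,v,w,z$ and corresponding NNI move can always be produced: the NNI preconditions demand that neither $\{u,w\}$ nor $\{v,z\}$ be an edge of $N$, and finding $z$ requires $w$ to have an $N$-neighbour besides $v$, which fails precisely when $w$ is itself a leaf of $N$ in $X$. Handling these degenerate configurations will require a case analysis exploiting the multiple non-tree neighbours of $v$ (as $\deg_N(v) \ge 3$), and, if necessary, picking a different pendant edge in $E(T) \setminus E(T_X)$ or employing the dual NNI direction centred at $u$ rather than at $w$. One should also check that $T^{*}$ does not accidentally acquire a new extra leaf at $u$ or $z$, but this is routine: $u \in V(T_X)$ with $\deg_T(u) \ge 2$ (else $u$ would be a leaf of $T_X$ in $X$, which is harmless anyway), and $\deg_{T^{*}}(z) = \deg_T(z) + 1 \ge 2$. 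Once the NNI existence is secured, the rest of the induction is mechanical.
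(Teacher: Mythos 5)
Your part (2) is fine, and it is essentially the paper's own argument: each leaf of $T$ not in $X$ lies outside $V(T_X)$, so its unique incident tree edge lies in $E(T)\setminus E(T_X)$, giving $\ell(N)\le m(N)$.

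Part (1), however, has a genuine gap: the inductive step fails as written. With $E(T^{*})=(E(T)\setminus\{\{u,v\}\})\cup\{\{v,z\}\}$ the only edge of $T^{*}$ incident to $v$ is $\{v,z\}$ (the edge $\{v,w\}$, though present in $N^{*}$, is not placed in $T^{*}$), so $\deg_{T^{*}}(v)=1$ and $v$ is \emph{still} a leaf of $T^{*}$ not in $X$. Your NNI has merely re-attached the pendant edge at $z$ instead of $u$: no leaf-to-leaf path passes through $v$, so $T^{*}_X=T_X$ and $|E(T^{*})\setminus E(T^{*}_X)|=k$, not $\le k-1$, and the claim $m(N^{*})\le k-1$ is unjustified. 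There is also a secondary defect: if $\{w,z\}\in E(T)$, then $T^{*}$ contains an edge deleted from $N^{*}$ and is not even a subgraph of $N^{*}$. The structural problem is that you operate at the free (leaf) end of a pendant tree, where an NNI can only relocate the dangling leaf; to reduce $m$ you must work at the \emph{attachment} end. This is what the paper does: choose $e=\{v_i,w_1\}\in E(T)\setminus E(T_X)$ with $v_i\in V(T_X)$ and $w_1\notin V(T_X)$ (every pendant tree supplies such an edge), take a $T_X$-edge $\{v_{i-1},v_i\}$ and any edge $\{w_1,w_2\}$ of $N$ with $w_2\neq v_i$, and perform the NNI $v_{i-1},v_i,w_1,w_2\to v_{i-1},w_1,v_i,w_2$. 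This splices $w_1$ into the leaf-to-leaf path, replacing $\{v_{i-1},v_i\}$ by the two edges $\{v_{i-1},w_1\},\{w_1,v_i\}$, so $e$ shifts from $E(T)\setminus E(T_X)$ into $E(T_X)$ and $m$ drops by exactly $1$ per move; iterating yields $\delta_{\NNI}(N)\le m(N)$. Finally, note that what you call the ``principal obstacle'' (verifying the NNI preconditions and the existence of $z$) is deferred in your write-up rather than resolved; but even granting it, your step does not decrease $m$, so the induction does not close.
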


\begin{proof}\ 

(1) We show that there is an NNI move that reduces $m$ by exactly 1. The idea is then to repeat this move until we reach a tree-based network. 

Suppose that $e$ is an edge in $E(T)\setminus E(T_X)$ which has one vertex $v_i$ on a path between leaves of $N$ and one vertex $w_1$ that is not.  Let $\{v_{i-1},v_i\}$ be an edge in $E(T_X)$ (since $v_i$ is on a path between leaves in $X$ there must be a vertex adjacent to it also on such a path --- it is possible $v_{i-1}$ is itself a leaf in $X$), and let $\{w_1,w_2\}$ be any other edge in $N$ with $w_2\neq v_i$ (such a vertex must exist because $w_1$ is not a leaf in $X$). 

Then the NNI move $v_{i-1},v_i,w_1,w_2\to v_{i-1},w_1,v_i,w_2$ has the effect of inserting the edge $e=\{w_1,v_i\}$ into a path between leaves, replacing the edge $\{v_{i-1},v_i\}$ with the pair of edges $\{v_{i-1},w_1\},\{w_1,v_i\}$ (see Figure~\ref{f:incorp.edge.into.path}).  Since the edge $e$ has shifted from $E(T)\setminus E(T_X)$ to $E(T_X)$, $m$ has decreased by 1 as required. {This shows that we need at most as many NNI moves to reach a tree-based network as there are extra edges in $T$.} 

\begin{figure}[ht]
    \includegraphics{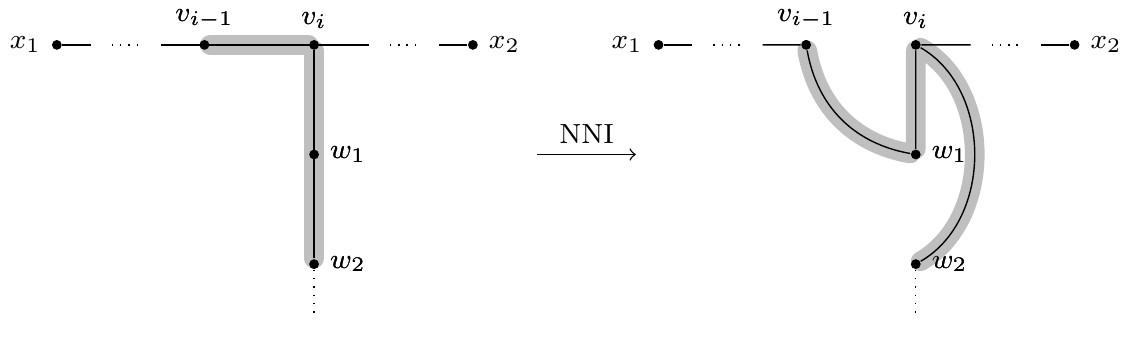}  
    \caption{An NNI move on the path $v_{i-1},v_i,w_1,w_2$ showing how an edge $e=\{v_i,w_1\}$ that is in a spanning tree of $N$ but not on a path between two leaves $x_1,x_2\in X$ can be incorporated into the path.}
    \label{f:incorp.edge.into.path}
\end{figure}

(2)
For a spanning tree $T$ of $N$ that minimizes the size of the set of edges $E(T)\setminus E(T_X)$ (which has size $m(N)$), it will contain as a subset all edges to leaves that are not from $X$.  Thus in particular for that spanning tree the number of such leaves is at most $m(N)$.  Consequently, whatever spanning tree of $N$ that minimizes $\ell(N)$ must have at most $m(N)$ non-$X$ leaves.
\end{proof}

Note that in general $m(N)\neq \delta_{\NNI}(N)$, as the example in Figure~\ref{f:NNI.le.m}(i) shows.  The network shown in Figure~\ref{f:double.level.5} has $m(N)=8$, but $\delta_{\NNI}$ is at most 3. Likewise, in general $m(N)\neq \ell(N)$, with the same example having $\ell(N)=2$.

\begin{figure}[ht]
\includegraphics[width=\textwidth]{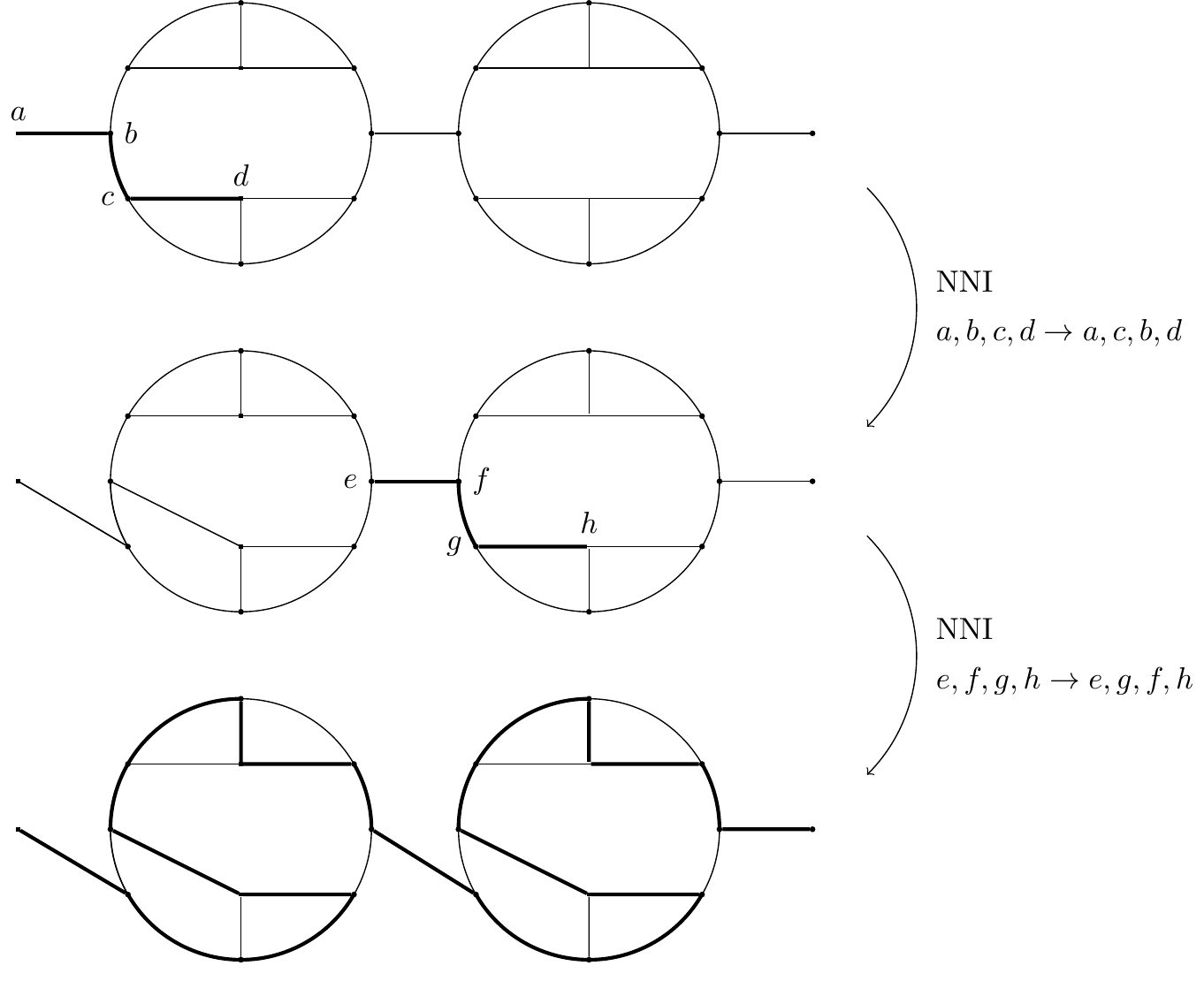}
\caption{This sequence of two NNI moves shows that the distance $\delta_{\NNI}(N)$ from the network $N$ at the top to a tree-based network such as the one at the bottom is at most $2$ 
(in fact, it is precisely 2, because an exhaustive computer search has shown that it is strictly greater than 1).
The paths on which the NNI moves act are shown in bold in the top two figures, and a support tree for the resulting tree-based network at the bottom is also shown in bold.  Only vertices relevant to the NNI moves are labelled. 
Note that $m(N)=8$, as can be seen by the spanning tree shown in Figure~\ref{f:double.level.5}$(i)$, showing that in general $\delta_{\NNI}(N)\neq m(N)$. }
\label{f:NNI.le.m}
\end{figure}

\subsection{\blue{Further questions regarding $\delta_{\NNI}$}}\label{s:NNI.questions}

{The NNI-based proximity measure $\delta_{\NNI}$ gives rise to several interesting additional questions: }

{First, simply regarding the metric, what is the maximum value of $\delta_{\NNI}$ for given $n$ and $k$, where $n$ is the number of leaves and $k$ is the tier? For instance, clearly it is $0$ for $k\le 4$ if $N$ is binary, and for $k\leq 3$ in general (since all such networks are tree-based, cf. \cite{francis2018tree} for the binary case and \cite{fischer2018nonbinary} for the non-binary case) and it is 1 for $k=5$ in the binary case (there are only two non-tree-based networks of level 5, cf. \cite{fischer2018nonbinary}, and both can be made tree-based with a single NNI by changing them to some other network; this works as all others are guaranteed to be tree-based). In all other cases, bounds on $\delta_{\NNI}$ still need to be determined.  

{Second, one might ask an inverse question.  Given 
the large number of possible NNI moves, could it be that 
it is always possible for networks in certain tiers, to make a tree-based network \emph{not} tree-based by a single NNI move? It turns out that in general the answer is ``no'', as Example~\ref{eg:tree-based-rank} illustrates. 

In that light, we could ask just how tree-based is a given tree-based network --- what is the number of moves required to make a tree-based network \emph{not} tree-based? }

The latter two observations suggest a unique measure associated to tree-basedness with NNI moves that may take values of any integer --- positive or negative.  If we define tree-based networks on the boundary of tree-basedness (one NNI move from being a non-tree-based proper phylogenetic network) as having ``tree-based rank'' 0, we can say a tree-based network that is at least $i$ NNI moves from the boundary has tree-based rank $i$, and a non-tree-based proper phylogenetic network that is $i$ NNI moves from being tree-based ($\delta_\NNI(N)=i$) has tree-based rank $-i$.  
That is, we write $||N||_{\TB}$ for the tree-based rank of $N$, defined by
\[||N||_{\TB}=
\begin{cases}
-\delta_\NNI(N) & \text{if }N\in \PN(X)\setminus\TBN(X)\\
|\text{NNI moves to non-tree-based}|-1 & \text{if }N\in\TBN(X).
\end{cases}
\]
Thus, network $N$ in Figure~\ref{f:tier5.rank} has $||N||_{\TB}>0$. It would be interesting to understand what features of a tree-based network give it rank $>0$. 

\begin{example}\label{eg:tree-based-rank}
The network $N$ in Figure~\ref{f:tier5.rank} has $||N||_{\TB}=4$, confirmed with an exhaustive Mathematica search over its 1-, 2-, and 3-neighborhoods. That is, there is a sequence of four NNI moves from $N$ that reach the network shown in Figure~\ref{f:level5}(i), one of the only two non-tree-based binary level 5 proper phylogenetic networks, and all proper phylogenetic networks of less than four NNI moves from $N$ are tree-based.    
Note that one can easily see that $N$ has tree-based rank $>0$ by the following argument. 

Noting that a network obtained by an NNI move on a binary network remains binary, and all proper, binary, phylogenetic networks of level less than 5 are tree-based~\cite{francis2018tree}, for $N$ to be changed to a proper non-tree-based phylogenetic network in a single NNI move, it must become level 5, which requires merging its blobs. 
This forces the single move to be centered on the edge $\{u,v\}$.  The two distinct NNI moves possible centred on this edge produce the networks $N'$ and $N''$ shown: $N'$ is generated by the move $u_1,u,v,v_1$, while $N''$ is generated by the move $u_1,u,v,v_2$ (the other two possible moves are symmetric).  Both of these are tree-based.

While this network has tree-based rank 4, it is nevertheless possible to ``destroy'' its tree-based-ness in just two moves, but only by leaving the space of proper phylogenetic networks. That is, one can perform two NNI moves on $N$ to produce a network that is not tree-based, but it is also not a proper phylogenetic network.  The notion of rank uses a distance \textit{within} the space of proper phylogenetic networks.
\end{example}

\begin{figure}
    \centering
    \includegraphics{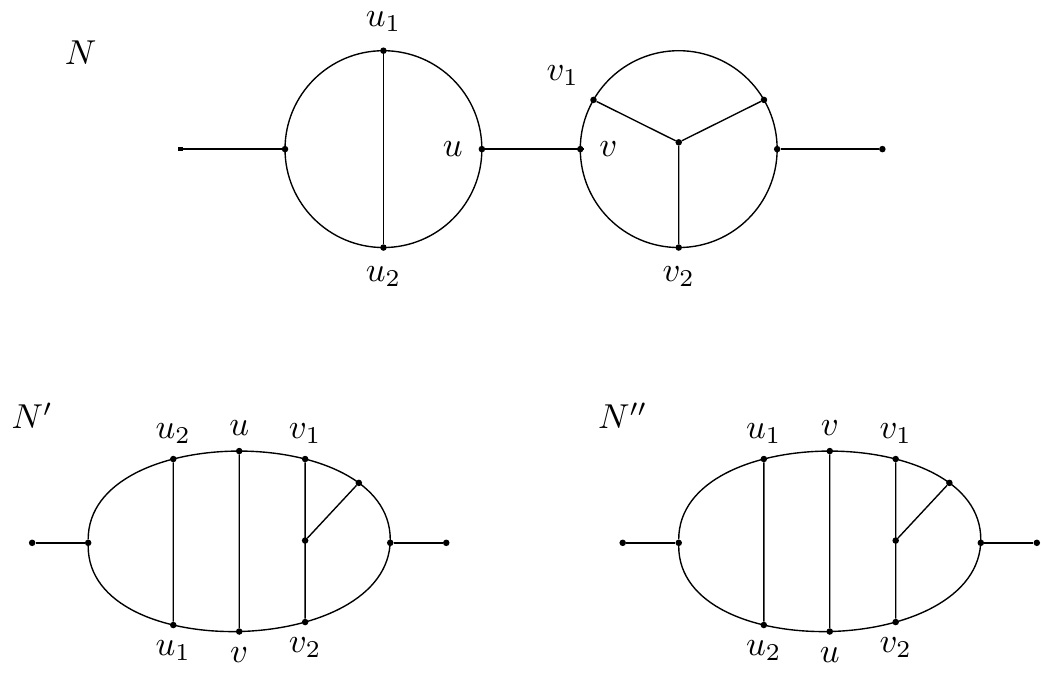}

    \caption{The tier 5 tree-based network $N$ is four NNI moves away from one of the only two non-tree-based binary networks in tier 5 
    (see~\cite{fischer2018classes} for these: one is shown in Figure~\ref{f:level5}). Its two level 5 neighbours, one NNI move away, are $N'$ and $N''$ shown.  See Example~\ref{eg:tree-based-rank}.
    }
    \label{f:tier5.rank}
\end{figure}

We are now in a position to summarize all proximity measures introduced in this manuscript as well as their relationships, before we make some comments about rooted networks. 

\section{A summary of proximity measures introduced in this paper}\label{s:summary}

There have been eight tree-based proximity measures for unrooted phylogenetic networks introduced in this paper, and we summarize them in Table~\ref{tab:summary}.  We have shown the following relationships among these measures:
\begin{itemize}
	\item $\lambda(N)\le \ell(N)=t(N)=p(N)=\tau(N)\le m(N)$ (Theorems~\ref{t:equalities},~\ref{t:lambda.le.tau},~\ref{t:m(N).is.bigger}). 
	\item $\frac{1}{2}\ell(N)\le e(N)\le\ell(N)$ (Theorem~\ref{t:e(N)}).
	\item $\lambda(N)\neq e(N)$ in general (comment after Theorem~\ref{t:e(N)}).
	\item $\delta_\NNI(N)\le m(N)$ and not equal in general (Theorem~\ref{t:m(N).is.bigger}).
\end{itemize}

\begin{table}[ht]
\begin{tabular}{lp{12cm}}
\hline
$\ell(N)$ 	& the minimal number of additional leaves in a spanning tree of $N$\\
$t(N)$		& the minimal number of leaves to add to make $N$ tree-based\\
$p(N)$		& the minimal number of disjoint paths $k$ including a leaf that partition $V(N)$, and for which all but the first {have one} end adjacent to another path; minus $(n-1)$\\
$\tau(N)$	& the minimal tier of a support network of $N$\\
$\lambda(N)$& the minimal level of a support network of $N$\\
$e(N)$		& the minimal number of edges to add to make $N$ tree-based\\
$\delta_\NNI(N)$	& the minimal number of $\NNI$ moves from $N$ to a tree-based network\\
$m(N)$		& the minimal number of edges in a spanning tree of $N$ beyond a spanning $X$-tree\\
\hline
\end{tabular}
\caption{Informal descriptors of proximity measures to tree-based. }\label{tab:summary}
\end{table}

We represent these relationships in Figure~\ref{f:rels}, before we turn our attention to rooted networks. 
\begin{figure}[ht]
\includegraphics[width=0.5\textwidth]{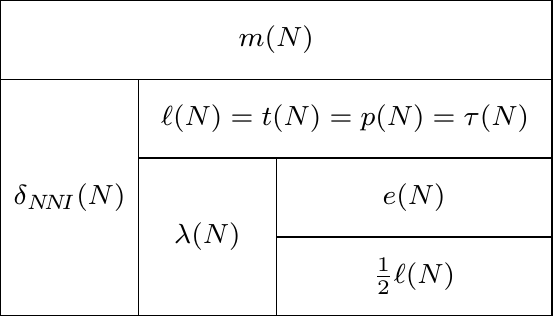}
\caption{Relationships among measures.  The relative position on the vertical axis indicates the inequalities among the measures. For instance, $\lambda(N)\le \ell(N)$.  Measures that are neither above nor below each other are either not linearly comparable, or their relationship is unknown.  For instance, for particular networks $\delta_\NNI(N)$ might be greater than $\ell(N)$ (e.g. Figure~\ref{f:NNI.le.m}), while
the relationship between  $\delta_\NNI(N)$ and $\ell(N)$, or $\lambda(N)$ and $e(N)$, is not known. 
}\label{f:rels}
\end{figure}

\section{Connections with rooted phylogenetic networks}\label{s:rooted}

In this section we discuss some issues surrounding lifting our measures to the rooted situation. We begin by relating tree-basedness in the two contexts.

\subsection{Rooted and unrooted tree-based networks}

First we need to introduce definitions of rooted phylogenetic networks and rooted tree-based phylogenetic networks.

In contrast to unrooted phylogenetic networks, rooted phylogenetic networks have both a special vertex called the root, and an orientation on each edge (hence we call them \emph{arcs}).  That is, a \textit{rooted phylogenetic network} is an acyclic digraph with: a root of in-degree 0 and out-degree at least 1; leaves of in-degree 1 and out-degree 0; and internal vertices of in-degree 1 and out-degree at least 2 (`tree vertices') or out-degree 1 and in-degree at least 2 (`reticulation vertices').  \blue{While it is not uncommon to require the root to have outdegree strictly above 1, we allow degree 1 roots here so that the tree base of such a network, which may have a root of out-degree 1, is also defined as a phylogenetic network.  This is also consistent with the definition of rooted phylogenetic networks in the original introduction of tree-based networks~\cite{francis2015phylogenetic}.}

There are several definitions of rooted tree-based networks in the literature.  Initially they were defined for binary networks on $X$ by saying (roughly) that a network $N^r$ is tree-based if it can be constructed from a tree by additional arcs that avoid cycles, and this was shown to be equivalent to the statement that $N^r$ has a spanning tree whose leaves are those of $X$~\cite{francis2015phylogenetic}: this formulation was used as the definition of tree-based in~\cite{francis2018new}.  The first of these definitions was generalized to the non-binary case in several ways in~\cite{jetten2016nonbinary}, the main one of which is as follows:

\begin{defn}[Definition 4, \cite{jetten2016nonbinary}]\label{d:rooted.treebased}
A rooted nonbinary phylogenetic network $N^r$ is called \emph{tree-based} with base-tree $T$, when it can be obtained from $T$ via the following steps:
\begin{enumerate}
    \item Add some vertices to arcs of $T$ called attachment points, with in-degree and out-degree 1.
    \item Add arcs, called linking arcs, between pairs of attachment points and from tree vertices to attachment points, so that $N^r$ remains acyclic and so that attachment points have in-degree or out-degree 1.
    \item Suppress any attachment points that are not incident to a linking arc.
\end{enumerate}
\end{defn}

For our purposes it makes sense to make explicit a ``spanning tree'' formulation of tree-based networks equivalent to the one for unrooted networks, with the following lemma, \blue{which generalizes the characterization in the binary setting~\cite[Prop 1]{francis2015phylogenetic}.}

\begin{lem}
A rooted phylogenetic network $N^r$ on $X$ is tree-based if and only if it has a (rooted) spanning tree whose leaf-set is $X$.
\end{lem}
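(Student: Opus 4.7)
The plan is to prove both directions by tracking the base-tree of Definition~\ref{d:rooted.treebased} as an explicit subgraph of $N^r$.

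For the forward direction, I would take a tree-based $N^r$ with base-tree $T_0$ and follow $T_0$ through the three construction steps. After step 1, subdividing arcs of $T_0$ by attachment points yields a tree $T_1$ that is a subdivision of $T_0$, hence still a tree with the same leaf set as $T_0$. Step 2 adds linking arcs but does not alter $T_1$ as a spanning subgraph. Step 3 suppresses attachment points not incident to linking arcs; these are internal degree-$2$ vertices of $T_1$, and suppressing them merges consecutive arcs of $T_1$ into single arcs. The result is a spanning tree $T$ of $N^r$ whose leaves coincide with those of $T_0$, namely $X$.

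For the reverse direction, I would begin with a spanning tree $T$ of $N^r$ with leaf set $X$ and construct a candidate base-tree $T_0$ by suppressing all internal degree-$2$ vertices of $T$. The key observation is that every non-tree arc $(u,v)$ of $N^r$ plays the role of a valid linking arc. Specifically, the head $v$ must be a reticulation of $N^r$: since $v$ is not the root, it has an in-arc in $T$, and the non-tree arc $(u,v)$ provides a second in-arc, so $v$ has in-degree $\ge 2$. Hence $v$ has out-degree $1$ in $N^r$, and this unique out-arc must lie in $T$ (otherwise $v$ would be a non-$X$ leaf of $T$). So $v$ has total degree $2$ in $T$ and is therefore an attachment point of $T_0$. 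Similarly, the tail $u$ cannot be a reticulation, because then its unique out-arc $(u,v)$ would be absent from $T$ and $u$ would be a non-$X$ leaf of $T$; so $u$ is a tree vertex of $N^r$ and in $T_0$ appears either as a tree vertex or as an attachment point depending on whether its degree in $T$ is $\ge 3$ or exactly $2$. Either case is a permissible linking arc type in Definition~\ref{d:rooted.treebased}.

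The remaining requirements of the definition then follow routinely. Acyclicity is inherited from $N^r$. Every attachment point $v$ satisfies the in-degree or out-degree $1$ condition: if $v$ is a reticulation it has out-degree $1$ in $N^r$, while if $v$ is a tree vertex it has in-degree $1$ in $N^r$. Step $3$ has nothing to suppress in this construction, because every degree-$2$ vertex of $T$ must be incident to at least one non-tree arc --- otherwise it would remain a degree-$2$ internal vertex of $N^r$, which is forbidden. The main obstacle I foresee is the bookkeeping in the reverse direction: one must simultaneously reconcile the head/tail identifications of non-tree arcs with the two allowed linking arc types and with the degree conditions on attachment points. Once that identification is set up cleanly, the verification reduces to the observations above about reticulations and tree vertices in $N^r$.
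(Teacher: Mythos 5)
Your proposal is correct and follows essentially the same route as the paper: the base tree (with its attachment points) yields the spanning tree in one direction, and in the other direction the degree-$2$ vertices of the spanning tree $T$ serve as attachment points, after classifying the endpoints of non-tree arcs via the in/out-degree constraints of rooted phylogenetic networks. The paper phrases the key step dually --- tree vertices of $T$ remain tree vertices of $N^r$ and hence cannot be heads of non-tree arcs, whereas you argue directly that each head is a reticulation with degree $2$ in $T$ --- but this is the same observation, and your extra bookkeeping (the degree conditions in step 2 and the vacuity of step 3) merely makes explicit what the paper leaves implicit.
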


\begin{proof}
The reverse direction is immediate: if $N^r$ is tree-based according to Definition~\ref{d:rooted.treebased}, then the tree $T$ on which it is constructed is a spanning tree with leaf-set $X$ as required.

Now suppose that $N^r$ is a rooted phylogenetic network that has a spanning tree $T$ whose leaves are $X$.  We need to show that it is tree-based, which is to say we need to show that it can be constructed from a base-tree via the procedure in Definition~\ref{d:rooted.treebased}.  

The spanning tree $T$ has two kinds of vertices (apart from the root and the leaves): tree vertices (in-degree 1 and out-degree $>1$); and vertices of degree 2.  The arcs that are in $N^r$ but not $T$ are therefore between these types of vertices.  {Now the crucial thing to note is that any tree vertex of $T$ is also a tree vertex in $N$, as its out-degree is larger than 1 in $T$, and thus also in $N$, so its in-degree can only be 1.} Therefore, an additional arc  can never be between a pair of tree-vertices, because there is only one arc into each tree vertex in $N^r$, and so that arc must already be in $T$ (since $T$ is a spanning tree).  For the same reason, such additional arcs cannot be from a degree 2 vertex to a tree vertex. Therefore the arcs that are in $N^r$ but not $T$ are only between degree 2 vertices of $T$, or from tree vertices in $T$ to degree 2 vertices of $T$. Thus $N^r$ is tree-based with base-tree $T$ and attachment points given by the degree 2 vertices in $T$, as required.
\end{proof}

We now characterize the concrete connections between rooted and unrooted tree-based networks.  

For some of these connections, we need to generalize the notion of phylogenetic networks to \emph{degenerate networks}.  

A degenerate network is an acyclic digraph with: a root of in-degree 0 and out-degree at least 1; leaves of in-degree 1 and out-degree 0; and internal vertices of in-degree at least 1 and out-degree at least 2. Note that all rooted phylogenetic networks are contained in the class of degenerate networks, as they fulfill all requirements; but degenerate networks additionally contain networks that have vertices of both in-degree more than 1 and out-degree more than 1 (we call such vertices \emph{degenerate vertices}). We call a degenerate network $D$ \emph{strictly degenerate} if it contains such a vertex. (For a related concept, see the notion of \emph{compressed network}, that removes arcs from reticulate vertices to tree-vertices and replaces each such pair of vertices with a single degenerate vertex~\cite[Section 10.3.4]{steel2016phylogeny}).

{Let $D$ be a strictly degenerate network with degenerate vertex set $\bar{V}$. We call a rooted phylogenetic network $N^r$ a \emph{phylogenetic refinement} of $D$ if $N^r$ can be obtained from $D$ by substituting all vertices $\bar{v} \in \bar{V}$ by two new vertices $\bar{v}_1$ and $\bar{v}_2$ and a directed edge $e=(\bar{v}_1,\bar{v}_2)$ such that all incoming edges of $\bar{v}$ in $D$ are incoming edges of $\bar{v}_1$ in $N^r$ and all outgoing edges of $\bar{v}$ in $D$ are outgoing edges of $\bar{v}_2$ in $N^r$. An example of a phylogenetic refinement of a strictly degenerate network is shown in Figure \ref{Fig_refinement}. The important thing here is to note that \emph{every} degenerate network has a phylogenetic refinement.}

{However, also note that in the binary case, the set of degenerate networks is identical to the set of rooted phylogenetic networks, as in the binary case, for each vertex the sum of its incoming and outgoing edges must be 3; so it is impossible for a vertex to have both indegree and outdegree larger than 1.}

Finally, just as with phylogenetic networks, we call a degenerate network $D$ \emph{tree-based} if it has a spanning tree whose leaf set $X$ coincides with that of $D$. We say that a rooted phylogenetic network (degenerate or not) is \emph{phylogenetically} tree-based if it is tree-based with a support tree whose root has out-degree $\ge 2$.

\begin{figure}[ht]
\includegraphics[width=1\textwidth]{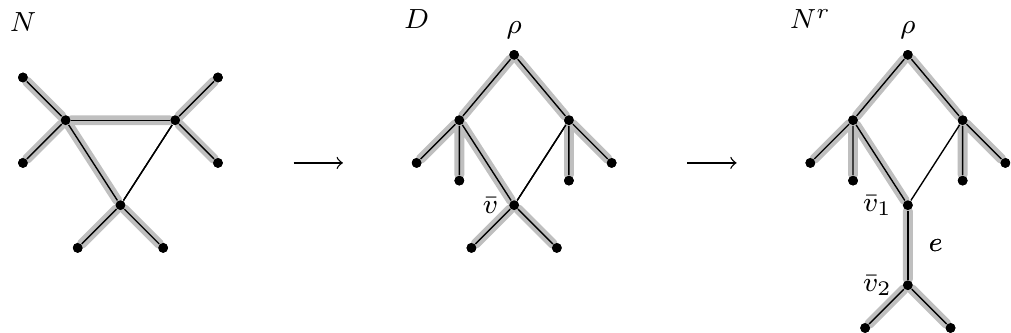}  
\caption{Unrooted tree-based network $N$ and its conversion to a rooted phylogenetically tree-based network $D$, which in this case is strictly degenerate as it contains the degenerate vertex $\bar{v}$. However, $D$ can be converted into its refinement $N^r$ by replacing vertex $\bar{v}$ by a new edge $e=(\bar{v}_1,\bar{v}_2)$. Note that $N^r$ is a rooted phylogenetic network {(as we assume that all edges of $D$ and $N^r$ are directed away from the root)}. 
Support trees for each network are highlighted.}  
\label{Fig_refinement}
\end{figure}

\begin{thm}\label{t:degenerate.rooted-unrooted}
If $N$ is an unrooted phylogenetic network on $X$, then it is tree-based (in the unrooted sense) if and only if it can be rooted on the midpoint of an edge, and with orientations specified on the edges, to give a {degenerate} network $D$ that is phylogenetically tree-based. 

Moreover, if $N^r$ is a phylogenetic refinement of a degenerate (phylogenetically) tree-based network $D$, $N^r$ is also (phylogenetically) tree-based.
\end{thm}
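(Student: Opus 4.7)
The plan is to prove the biconditional by giving an explicit construction in each direction, and then to establish the ``moreover'' clause by directly lifting a support tree of $D$ to one for $N^r$.

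For the forward direction of the biconditional, I would start with $N$ unrooted tree-based and a support tree $T$. Choose any edge $e=\{u,v\}$ of $T$, subdivide it with a new midpoint vertex $r$, and orient the resulting spanning tree $T^r$ with all edges directed away from $r$; this gives a rooted tree whose root has out-degree $2$ and whose leaves are exactly $X$. Next, I will fix a total order $\sigma$ on the vertices that extends the ancestor--descendant order induced by $T^r$ (for example, a BFS order from $r$) and orient every non-tree edge from its $\sigma$-earlier endpoint to its $\sigma$-later endpoint. Any directed cycle would contain some edge going from a later to an earlier vertex, so the resulting digraph $D$ is acyclic. The degree checks are then routine: $r$ has in-degree $0$ and out-degree $2$; each leaf $x\in X$ has only one incident edge in $N$, which lies in $T$ and is oriented into $x$; and each internal vertex of $N$ is internal in $T$ too (since the leaves of $T$ are precisely $X$), so it has at least one child in $T^r$, giving in-degree $\geq 1$ and out-degree $\geq 1$ in $D$. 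Thus $D$ is a degenerate network, and $T^r$ witnesses that $D$ is phylogenetically tree-based.

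For the backward direction, I will assume $N$ has been rooted on the midpoint $r$ of some edge $e$ and oriented so that the result $D$ is a degenerate network admitting a support tree $T^D$ whose leaves are $X$ and whose root has out-degree $\geq 2$. Because the undirected degree of $r$ in $N$ equals $2$ (the two halves of $e$), the out-degree of $r$ in $D$, and hence in $T^D$, is at most $2$; combined with the hypothesis it is exactly $2$. Suppressing $r$ in $T^D$ and forgetting the orientations merges the two arcs out of $r$ back into the edge $e$, producing an unrooted spanning tree of $N$ whose leaf set is $X$, so $N$ is unrooted tree-based.

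For the ``moreover'' clause, let $D$ be a degenerate tree-based network with support tree $T^D$, and let $N^r$ be a phylogenetic refinement in which each degenerate vertex $\bar v$ is replaced by an arc $(\bar v_1,\bar v_2)$. I would build $T^{N^r}$ by reinterpreting every edge of $T^D$ in $N^r$ (edges into $\bar v$ become edges into $\bar v_1$, and edges out of $\bar v$ become edges out of $\bar v_2$), and then adjoining the new arc $(\bar v_1,\bar v_2)$ for each degenerate vertex. A quick count gives $|E(T^{N^r})|=(|V(D)|-1)+|\bar V| = |V(N^r)|-1$, and $T^{N^r}$ is connected because $T^D$ was, so it is a spanning tree. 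For the leaf set, $\bar v_1$ has in-degree $\geq 1$ (the unique parent edge of $\bar v$ in $T^D$) and out-degree $1$ (the new arc to $\bar v_2$), and $\bar v_2$ has out-degree $\geq 1$ in $T^{N^r}$ because $\bar v$ cannot be a leaf of $T^D$ (whose leaves are $X$). Hence the leaves of $T^{N^r}$ are exactly $X$, so $N^r$ is tree-based; and since the root of $D$ has in-degree $0$ and therefore cannot be a degenerate vertex, the root is preserved by the refinement and its out-degree in $T^{N^r}$ equals that in $T^D$, so the phylogenetic version is preserved too.

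The main obstacle, modest as it is, will be the degree bookkeeping: checking that the orientation in the forward direction produces a legitimate degenerate network at every vertex (rather than, say, creating internal sinks that are not in $X$), and checking in the refinement step that adjoining the new arcs $(\bar v_1,\bar v_2)$ preserves connectivity, acyclicity, and the correct leaf set of $T^{N^r}$ simultaneously.
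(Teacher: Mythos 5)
Your proposal is correct and takes essentially the same approach as the paper's proof: you orient the spanning tree away from a root placed at the midpoint of a tree edge, use a linear extension of the induced partial order (the paper invokes the order extension principle where you use a BFS order) to orient the non-tree edges acyclically, reverse the process by forgetting orientations and suppressing the root for the converse, and lift the support tree through the refinement by adjoining each new arc $(\bar v_1,\bar v_2)$. Your additional degree bookkeeping and the edge count verifying that $T^{N^r}$ is spanning merely make explicit what the paper's proof leaves implicit.
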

      
\begin{proof}

{We start by proving both directions of the `if and only if' statement.}

($\impliedby$)  
Suppose that $N$ can be made into a  
{degenerate phylogenetically tree-based network } by assigning a root to an edge and specifying orientations on all edges, as in the theorem statement. Then any support tree of $N^r$ whose root has out-degree $\ge 2$ will become a spanning tree for the unrooted network $N$ when the orientations on edges are ignored (suppressing the root vertex if it has degree 2), showing that $N$ is tree-based.

($\implies$)     
If $N$ is tree-based (in the unrooted sense) then we can delete a set $F$ of edges of $N$ to obtain a spanning tree $T$ of $N$ with leaf set $X$.  Now select any edge $e$ of $T$, root $T$ at the midpoint of $e$, and direct all edges of $T$ away from this root.

Now let $t: V \to \{1,2,3,\dots,|V|\}$ be any one-to-one map satisfying the property that if $(w, w')$ is an arc of the directed version of $T$ then $t(w) < t(w')$ (such a map always exists, since by the order extension principle, any poset -- such as the one induced by $T$ --  \blue{has a linear extension}).

Now, for each edge $\{u,v\} \in F$ that was deleted, orient the edge $(u,v)$ (i.e. $u$ directed to $v$)  if $t(u) <t(v)$ and $(v,u)$ if $t(v)<t(u)$. Now all arcs of $N$ are directed in a way that leads to no directed cycles, so we have turned $N$ into a valid {degenerate} network $D$ {(note that $D$ may contain vertices of both in-degree $>1$ and out-degree $>1$, as in Figure \ref{Fig_refinement}, so we cannot guarantee that this $D$ is a rooted phylogenetic network)}, and $N$ is also (phylogenetically) tree-based in the rooted sense (as implied by the rooted version of $T$). This completes the {first part of the }proof.

Next, we show that if $N^r$ is a phylogenetic refinement of a degenerate (phylogenetically)  tree-based network $D$, $N^r$ is also (phylogenetically) tree-based. This can easily be seen, because every support tree $T$ of $D$ is a spanning tree and thus covers all vertices -- in particular also all degenerate vertices. Now, while replacing a degenerate vertex $\bar{v}$ in $D$ by $e=(\bar{v}_1, \bar{v}_2)$ as described above in order to turn $D$ into $N^r$, we also perform the same replacement in $T$ in order to turn $T$ into a support tree $T'$ of $N^r$. In particular, we make sure the edge $e$ is contained in $T'$. This procedure leads to a spanning tree $T'$ of $N^r$ with the same leaf set as $T$, and as both $T$ and $N^r$ have the same leaf set as $D$, this shows that $T'$ is a support tree for $N^r$. This scenario is depicted by Figure \ref{Fig_refinement}. This completes the proof.
\end{proof}     

{This result immediately leads to the following corollary.}

\begin{cor}\label{t:rooted-unrooted} 
If $N$ is a \emph{binary} unrooted phylogenetic network on $X$, then it is tree-based (in the unrooted sense) if and only if it can be rooted on the midpoint of an edge, and with orientations specified on the edges, to give a \emph{binary} rooted phylogenetic network network $N^r$ that is phylogenetically tree-based.\end{cor}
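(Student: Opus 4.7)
The plan is to derive Corollary~\ref{t:rooted-unrooted} directly from Theorem~\ref{t:degenerate.rooted-unrooted}, exploiting the observation recorded immediately before that theorem that \emph{in the binary case the class of degenerate networks coincides with the class of rooted phylogenetic networks}. Thus, beyond quoting Theorem~\ref{t:degenerate.rooted-unrooted}, essentially all that must be verified is that in the binary setting the degenerate network it produces is actually a genuine (binary) rooted phylogenetic network, i.e. not strictly degenerate.

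For the ``$\Rightarrow$'' direction, I would assume $N$ is a binary unrooted tree-based network and feed it into Theorem~\ref{t:degenerate.rooted-unrooted}. This supplies an edge $e$ of $N$ and an orientation of its edges which, when $N$ is rooted at the midpoint of $e$, yields a degenerate network $D$ that is phylogenetically tree-based. The key verification is then a parity argument: because $N$ is binary, every internal vertex has degree exactly $3$, so in $D$ each internal vertex satisfies $\mathrm{indeg} + \mathrm{outdeg} = 3$, which forbids the configuration $\mathrm{indeg}\ge 2$ and $\mathrm{outdeg}\ge 2$ that characterises a degenerate vertex (that would require total degree at least $4$). The subdivided root has in-degree $0$ and out-degree $2$, and leaves have in-degree $1$ and out-degree $0$. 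Hence $D$ is a binary rooted phylogenetic network $N^r$ whose root has out-degree $2$, and phylogenetic tree-basedness is inherited from $D$.

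For the ``$\Leftarrow$'' direction, I would assume $N$ admits a rooting at the midpoint of an edge together with an orientation producing a binary phylogenetically tree-based rooted phylogenetic network $N^r$. Since every binary rooted phylogenetic network is in particular a degenerate network, $N^r$ already plays the role of the degenerate network required by the reverse implication of Theorem~\ref{t:degenerate.rooted-unrooted}, and therefore $N$ is tree-based in the unrooted sense.

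The main obstacle, such as it is, is the parity argument ruling out degenerate vertices in the binary case; given the degree-$3$ constraint this is immediate. In particular, no appeal to the phylogenetic refinement portion of Theorem~\ref{t:degenerate.rooted-unrooted} is needed, because in the binary setting $D$ already equals its own refinement. Consequently the corollary is essentially a clean specialisation of Theorem~\ref{t:degenerate.rooted-unrooted} obtained by observing that the general obstruction (degenerate vertices) vanishes when all internal vertices have degree three.
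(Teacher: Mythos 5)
Your proposal is correct and follows essentially the same route as the paper: the paper's proof likewise deduces the corollary directly from Theorem~\ref{t:degenerate.rooted-unrooted}, using the fact (noted there via exactly your parity argument, that $\mathrm{indeg}+\mathrm{outdeg}=3$ for internal vertices) that no strictly degenerate binary networks exist, so the degenerate network $D$ produced by the theorem is automatically a binary rooted phylogenetic network. Your explicit checks of the root's degrees and the observation that the refinement step is vacuous in the binary case are implicit in the paper's one-line argument but add nothing contradictory.
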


\begin{proof}This is a direct consequence of Theorem \ref{t:degenerate.rooted-unrooted}, exploiting the fact that in the binary case, there exist no strictly degenerate networks, so the conversion of an unrooted network $N$ into a degenerate network $D$ as described in the proof of Theorem \ref{t:degenerate.rooted-unrooted} will immediately lead to a rooted binary phylogenetic network.
\end{proof}

Note that these results explain the contrast in the decision problem between rooted and unrooted networks: for a rooted network, determining whether it is tree-based can be done polynomially~\cite{francis2015phylogenetic}; for an unrooted network, it is NP-complete~\cite{francis2018tree}.  The problem for the unrooted case is that the conversion to a rooted network requires testing orientations on a large number of edges --- an exponential problem.

Moreover, also note that Theorem \ref{t:degenerate.rooted-unrooted} heavily depends on the spanning tree in $N^r$ having a root of outdegree larger than 1; and note that there are rooted phylogenetic networks which are tree-based, but \emph{only} have support trees in which the root has outdegree precisely 1 (for example the right hand network in Figure~\ref{f:rooted.1.leaf}).
This shows that being tree-based is not sufficient for a rooted network to give rise to an unrooted tree-based network --- it indeed needs to be \emph{phylogenetically} tree-based.

{One might ask whether the binary result, Corollary~\ref{t:rooted-unrooted}, extends to non-binary phylogenetic networks without the requirement to include degeneracy.  Using the definition of tree-based in this paper (Hendriksen's ``loosely'' tree-based~\cite{hendriksen2018tree}), the answer is unfortunately no.  This can be seen from considering the tree-based network on six leaves obtained from a triangle with two leaves attached to each corner (network $N$ in Figure~\ref{Fig_refinement})\footnote{We thank Michael Hendriksen for providing this example.}.  There are two alternative definitions of tree-based provided in~\cite{hendriksen2018tree}, one of which does provide a result analogous to Corollary~\ref{t:rooted-unrooted}, as follows.} 

{Hendriksen's definition of ``tree-based'' is that $N$ is tree-based if it has a spanning tree $T$ that has all edges of $N$ between vertices of degree $>3$, and for which all degree 2 vertices in $T$ were degree 3 in $N$.  In this case it is conceivable that a good choice of edge to place a root on might make a network rooted tree-based, but a poor choice might not.  So proving a result like Corollary~\ref{t:rooted-unrooted} may require a way to make wise choices about edges on which to root the network.}  

{The Hendriksen definition of ``strictly'' tree-based, on the other hand, is that $N$ is tree-based if it has a spanning tree that has \emph{all} edges of $N$ incident to vertices of degree $>3$~\cite{hendriksen2018tree}.  In this case, we have the following analogue to Corollary~\ref{t:rooted-unrooted}, that uses Jetten and van Iersel's definition of a strictly tree-based rooted network: $N^r$ is strictly tree-based if it can be obtained from a base tree by adding attachment points to arcs, and connecting additional arcs only between attachment points in such a way as to keep $N^r$ acyclic~\cite[Definition 5]{jetten2016nonbinary}.}  As before, a \emph{phylogenetically strictly tree-based network} is a strictly tree-based network that has a spanning tree whose root has degree greater than $1$.

\begin{thm}\label{t:strictly.rooted.unrooted}
If $N$ is an unrooted phylogenetic network on $X$, then it is strictly tree-based (in the unrooted sense) if and only if it can be rooted on the midpoint of an edge, and with orientations specified on the edges, to give a rooted phylogenetic network $N^r$ that is phylogenetically strictly tree-based. 
\end{thm}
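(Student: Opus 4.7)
The plan is to mimic the proof of Theorem~\ref{t:degenerate.rooted-unrooted}, with the crucial additional observation that the strictness requirement, both in the rooted sense of Jetten--van Iersel (\cite[Def.~5]{jetten2016nonbinary}) and in Hendriksen's unrooted sense, forces every non-spanning-tree edge of the network to be incident only to vertices of spanning-tree degree $2$ (``attachment points''). This structural restriction is precisely what will let the conversion between the rooted and unrooted settings produce a genuine rooted phylogenetic network, rather than the merely degenerate one that appeared in Theorem~\ref{t:degenerate.rooted-unrooted}.

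For the backward direction $(\impliedby)$, I would suppose $N^r$ is phylogenetically strictly tree-based with support tree $T^r$ whose leaf-set is $X$ and whose root has out-degree $\geq 2$. Unorienting the arcs of $N^r$ and suppressing the now degree-$2$ root recovers $N$, and converts $T^r$ into a spanning tree $T$ of $N$ with leaf-set $X$. The strictness condition on $N^r$ ensures the non-tree arcs of $N^r$ run only between attachment points, i.e.\ degree-$2$ vertices of $T^r$, so the non-tree edges of $N$ are likewise only between degree-$2$ vertices of $T$, giving strict tree-basedness of $N$ in the unrooted sense.

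For the forward direction $(\implies)$, I would let $T$ be a spanning tree of $N$ witnessing strict tree-basedness, subdivide an arbitrary edge of $T$ by a new root vertex $r$, and orient all edges of $T$ away from $r$; the root then automatically has out-degree $2$, so the phylogenetic condition is satisfied. Extending the partial order on vertices induced by the rooted $T$ to a total order $t$, and orienting each non-tree edge from lower to higher $t$-value (exactly as in the proof of Theorem~\ref{t:degenerate.rooted-unrooted}) guarantees acyclicity. The main obstacle, and the key point distinguishing this theorem from Theorem~\ref{t:degenerate.rooted-unrooted}, is verifying that no degenerate vertex arises: each endpoint of a non-tree edge is an attachment point of tree-degree $2$ and hence has in-degree $1$ and out-degree $1$ from $T$, so I must check that once the extra arcs incident to such a vertex are oriented, the vertex becomes either a tree vertex or a reticulation rather than a degenerate vertex of both in-degree $>1$ and out-degree $>1$. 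The strictness condition is precisely what allows this to be arranged, if necessary by refining the naive topological extension to orient the non-tree arcs at each attachment point uniformly (all incoming or all outgoing) in a way still consistent with a topological order on the tree. Once the vertex-type check is complete, $N^r$ is a valid rooted phylogenetic network whose non-tree arcs all connect attachment points of $T$, with root out-degree $\geq 2$, so Jetten--van Iersel's Definition~5 is satisfied and $N^r$ is phylogenetically strictly tree-based as required.
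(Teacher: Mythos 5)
Your overall route is the paper's: reuse the rooting-and-orientation construction from Theorem~\ref{t:degenerate.rooted-unrooted} and argue that strictness prevents degenerate vertices; your backward direction (unorient, suppress the degree-2 root, observe the support tree becomes a spanning tree containing all edges incident to vertices of degree $>3$) matches the paper's in substance. The problem is in how you discharge what you yourself identify as the crux of the forward direction. You leave the non-degeneracy check as something to be ``arranged'', proposing ``if necessary'' to refine the topological extension so that the non-tree arcs at each attachment point are oriented uniformly (all in or all out) consistently with a topological order on the tree. That is a promissory note, not an argument: you neither show such a refinement exists nor that it preserves acyclicity, and a priori making several arcs at one vertex point the same way could conflict with the order constraints at their other endpoints. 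As written, the key step of the proof is missing.

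The fix is that no refinement is needed, because the scenario you are guarding against cannot occur --- a one-line degree count, which is exactly how the paper closes the argument. Under Hendriksen's strict definition every edge of $N$ incident to a vertex of degree $>3$ lies in $T$, and since the leaves of $T$ are precisely $X$ (which have degree 1 in $N$), every internal vertex of $N$ has degree at least 2 in $T$. Hence each endpoint of a non-tree edge has degree exactly 3 in $N$, degree 2 in $T$, and is incident to exactly \emph{one} non-tree arc; your worry about multiple non-tree arcs at a single attachment point is vacuous. After rooting and orienting, such a vertex has in-degree 1 and out-degree 1 from $T$ plus one extra arc, so its degree profile is $(1,2)$ or $(2,1)$ regardless of how the total order orients that arc, while a degenerate vertex requires in-degree $\ge 2$ \emph{and} out-degree $\ge 2$, i.e.\ total degree $\ge 4$. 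This is the paper's remark that ``reinserting the edges from $N$ but not $T$ will create degree 3 vertices that cannot be degenerate.'' With this observation substituted for your proposed refinement, your proof is complete and coincides with the paper's.
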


\begin{proof}
{The proof follows very similar lines to that of Theorem~\ref{t:degenerate.rooted-unrooted}.}

If $N$ is an unrooted, strictly tree-based network, then by definition it has a spanning tree $T$ that contains all edges incident to vertices of degree greater than 3.  Thus all edges in $N$ but not in $T$ are between vertices of degree 3 in $N$, and so they connect vertices of degree 2 in $T$. Choosing an edge in $T$, placing a root at its midpoint, and orienting the tree and the edges not in $T$ according to the process described in the proof of Theorem~\ref{t:rooted-unrooted} produces a rooted phylogenetic network that is phylogenetically strictly tree-based as required.  Note that reinserting the edges from $N$ but not $T$  will create degree 3 vertices that cannot be  degenerate, and so the network will be a valid non-degenerate network.

{Conversely, if $N^r$ is a phylogenetically strictly tree-based network then it has a support tree that includes all arcs incident to vertices of degree greater than 3, and with linking arcs between vertices of degree 3.  Ignoring orientation on arcs, and suppressing the root if it is degree 2, we obtain an unrooted phylogenetic network with a spanning tree that contains all edges incident to any vertex of degree greater than 3, and so is a strictly tree-based unrooted phylogenetic network.}
\end{proof}

\subsection{Tree-based proximity, rooted and unrooted}

Proximity measures for rooted phylogenetic networks have already been well-studied~\cite{francis2018new,pons2018tree}, but an interesting question is whether the new measures for unrooted phylogenetic networks introduced in the present paper have analogs in the rooted setting.

At least in one case, the indications are that the lifting is not obvious,  best seen through an example such as the one given in Figure~\ref{f:rooted.1.leaf}.  
This network is not tree-based, and has $\ell(N)=1$ --- there is a spanning tree that has one additional leaf not from $X$ \blue{(see Figure~\ref{f:1-leaf-rooted-eg})}. 

\begin{figure}[ht]
\includegraphics[width=.5\textwidth]{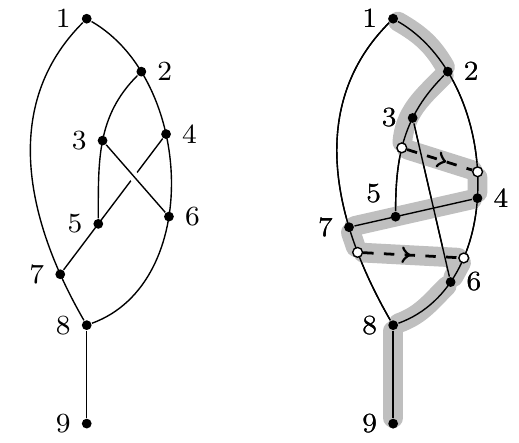}
\caption{On the left is a rooted, non-tree-based phylogenetic network with one leaf, and on the right, the same network with two additional edges (shown as dashed arrows) making it tree-based (with tree base shown in grey).  Note that there are other solutions.  For instance, instead of the lower added edge in the figure, one could add an edge from below 6 to above 5 (subdividing the edge (3,5) twice), also making it tree-based without introducing a cycle. \blue{The same network has tree-based proximity $\ell(N)=1$ (see Figure~\ref{f:1-leaf-rooted-eg}). } }\label{f:rooted.1.leaf}
\end{figure}

However, an exhaustive search using the computer algebra system Mathematica 10 \cite{Mathematica} showed that there is no choice of two edges in $N$ that can be connected by a single edge to make the network tree-based, showing that $e(N)>1=\ell(N)$ (recall that in the unrooted setting we have shown that $\ell(N)\le e(N)$, Theorem~\ref{t:e(N)}). In this search, we checked all 132 combinations of ordered edge pairs such that one was the outgoing edge for the new extra edge and the other one the incoming edge. Note that if the resulting network was tree-based, the support tree would be a Hamiltonian path from the root to the only leaf (as there is just this one leaf). In 128 of the 132 cases, the resulting directed graph had no Hamiltonian path and could therefore not be tree-based. In the remaining four cases, there was a Hamiltonian path from the root to the only leaf, but the graph also contained a cycle -- i.e. it was not a valid rooted phylogenetic network. 

While adding a single edge in this case is not sufficient to make $N$ tree-based, $N$ can be made tree-based with the addition of \emph{two} extra directed edges, meaning that $e(N)=2$. One such choice of two additional edges is shown in Figure~\ref{f:rooted.1.leaf}.  This means that for rooted phylogenetic networks, Theorem~\ref{t:e(N)} does not hold, as here we have $e(N)>\ell(N)$.
 
In general, it seems more difficult to prove that it is always possible to find edges to add to make a network tree-based, largely because placing orientation on the edges means that additional edges can cause cycles.  However, we have failed to find a network for which it has not been possible, and so we make the following conjecture.

\begin{conj}
Let $N$ be a rooted phylogenetic network on leaf set $X$. Then there exists a set of directed edges that can be added to $N$ to turn it into a tree-based rooted phylogenetic network on leaf set $X$. Furthermore, we conjecture that $\ell(N)\le e(N)\le 2\ell(N)$.
\end{conj}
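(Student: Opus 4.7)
The plan is to establish the conjecture in three parts: existence of an edge-addition that makes $N$ tree-based, the lower bound $\ell(N)\le e(N)$, and the upper bound $e(N)\le 2\ell(N)$. Throughout, the strategy is to adapt the unrooted arguments of Lemma~\ref{l:add.edges} and Theorem~\ref{t:e(N)}, but with the constraint that every arc addition must preserve acyclicity.

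For existence and the upper bound, I would iteratively modify a spanning tree $T$ of $N$ with leaf set $X\cup Y$, reducing $|Y|$ by at least one per step using at most two added arcs. For each $v\in Y$, $v$ is internal in $N$ but a leaf of $T$, so it is either a tree vertex (out-degree $\ge 2$ in $N$, $0$ in $T$) or a reticulation vertex (in-degree $\ge 2$ in $N$, $1$ in $T$). In the tree-vertex case, I pick an outgoing arc $(v,u)\notin T$, subdivide it with a new vertex $v'$, and add a new arc from $v'$ to a subdivision point $w'$ on a second arc, choosing $w'$ to lie strictly below $v'$ in a topological order consistent with $T$ so that no directed cycle is created. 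The spanning tree is then updated by inserting $(v,v')$ and $(v',w')$ and removing one arc of $T$ to preserve tree-ness; this removes $v$ from the leaf set, though the subdivision or removal may introduce one further extra leaf elsewhere that a second added arc can clear, accounting for the factor of two. The reticulation-vertex case is symmetric, using incoming arcs. Summing over the $\ell(N)$ extra leaves yields $e(N)\le 2\ell(N)$.

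For the lower bound $\ell(N)\le e(N)$, I would adapt the forest construction of Theorem~\ref{t:e(N)}. Let $N^{*}$ be a minimal tree-based network obtained from $N$ by adding $e(N)$ arcs, and let $T$ be a support tree of $N^{*}$. By minimality, all $e(N)$ added arcs lie in $T$. Removing them yields a forest $F$ with $e(N)+1$ components, and de-subdividing yields a forest $F'$ on $V(N)$. The key observation sharpening the unrooted factor $1/2$ to $1$ is this asymmetry: when an added arc $(w_1,w_2)$ subdivides arcs $(a,b)$ and $(c,d)$, the vertex $w_1$ is a tree vertex and $w_2$ a reticulation vertex, so $(a,w_1)\in T$ is forced while $(c,w_2)\notin T$ is forced (since $(w_1,w_2)\in T$ is $w_2$'s chosen in-arc). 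Consequently, when $(w_1,w_2)$ is removed and the subdivision undone, $b$, $c$, and $d$ all retain their out-degree in $F'$, so the only way a new leaf can be produced is at $a$, and only when $(w_1,b)\notin T$ causes $(a,w_1)$ to be ``lost'' in de-subdivision. Thus at most $e(N)$ such lost out-arcs occur in total, which in turn bounds the number of new leaves in $F'$ by $e(N)$. Reconnecting components of $F'$ via arcs of $N$ in a topologically consistent way yields a spanning tree $T''$ of $N$ whose leaves are contained in $X$ together with these new leaves, giving $\ell(N)\le\ell(T'')\le e(N)$.

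The main obstacle will be the existence argument and its upper bound: preserving acyclicity and ensuring that the net effect of the two added arcs truly reduces $|Y|$ requires a careful case analysis on the local structure around $v$ and on how the chosen subdivision interacts with other descendants of $v$ in $T$, and it may be subtle to guarantee that ``at most two'' always suffices rather than more. The lower bound is cleaner once the asymmetry between tree and reticulation subdivision vertices is exploited, but it still requires verifying that multiple added arcs sharing a common ``$a$'' vertex contribute at most one new leaf each in aggregate, and that the reconnection step does not spoil the leaf count.
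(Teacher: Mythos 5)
You should first note that the paper does not prove this statement at all: it is stated as a \emph{conjecture}, supported only by computational evidence, so there is no paper proof to match your attempt against. Indeed, the paper's example in Figure~\ref{f:rooted.1.leaf} (verified by exhaustive search) has $\ell(N)=1$ but $e(N)=2$, which is exactly why the unrooted bound $e(N)\le\ell(N)$ of Theorem~\ref{t:e(N)} fails in the rooted setting and why the factor $2$ appears on the right-hand side. Measured against that standard, your proposal does not close the genuinely open part. The gap is in the existence claim and the upper bound $e(N)\le 2\ell(N)$: in your leaf-clearing step you insert $(v,v')$ and $(v',w')$ into the spanning tree and must delete the surviving half $(p,w')$ of the subdivided tree arc to keep in-degrees equal to one; the tail $p$ may thereby drop to out-degree $0$ in the tree and become a \emph{fresh} extra leaf. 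Your remedy --- ``a second added arc can clear it'' --- is the same operation and can create yet another extra leaf at its own tail, so nothing in the sketch rules out an unbounded cascade; the claim that two arcs per element of $Y$ always suffice is asserted, not proved (and you concede as much in your final paragraph). Your topological-order device correctly handles acyclicity of a \emph{single} insertion, but acyclicity constrains which $w'$ are available (only arcs below $v'$), and it is precisely the interaction between this constraint and the cascade accounting that the paper's authors could not resolve; the conjecture even leaves open whether \emph{any} finite set of arcs always suffices.

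The lower bound half of your proposal, by contrast, is a genuinely promising and apparently new contribution beyond what the paper offers. Your asymmetry observation is correct: in a rooted support tree every non-root vertex has exactly one in-arc, so $(a,w_1)\in T$ is forced (as $w_1$ has in-degree $1$ in $N^{*}$), and since minimality forces the added arc $(w_1,w_2)$ into $T$, we get $(c,w_2)\notin T$; consequently removal of an added arc can create a new leaf only at the single tail vertex $a$ (when $(w_1,b)\notin T$), while the $w_2$-end can only create a new component \emph{root} at $d$, not a leaf. This sharpens the unrooted ``at most two terminals per edge'' count of Theorem~\ref{t:e(N)} to one, as the rooted conjecture requires. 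You still owe the bookkeeping for chains of subdivision vertices and for several added arcs subdividing the same original arc, and a verification that the reconnection step (attaching each component root via one of its in-arcs in $N$, which exists and cannot create leaves or directed cycles since $N$ is acyclic) preserves the leaf bound --- but these look routine. In summary: modulo that verification you have a credible proof of $\ell(N)\le e(N)$, while the existence statement and $e(N)\le 2\ell(N)$ --- the parts that make this a conjecture --- remain open in your write-up just as in the paper.
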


Another interesting aspect about the difference between rooted and unrooted networks is that the proximity measures based on bipartite matchings introduced for rooted networks in \cite{francis2018new} do not carry over to the unrooted case. This is due to the fact that finding a maximum matching and determining its size can be done in polynomial time. However, we know that determining tree-basedness of an unrooted network is NP-complete \cite{francis2018tree}, so the size of a maximum matching cannot be decisive here (unless P=NP).

\section{Discussion}\label{s:discussion}
In the present manuscript, we have generalized --- from the rooted to the unrooted setting --- three proximity measures that measure the distance from any phylogenetic network to a tree-based phylogenetic network. We have shown that -- just like in the rooted case -- these measures turn out to be identical. Moreover, we introduced five new measures, one of which was also identical to the first three, whereas four can be shown to be different in general. Introducing and analyzing such proximity measures for unrooted networks, however, is not only of mathematical interest. It is also relevant for biological studies, where phylogenetic networks are often unrooted, e.g. because the root position is unknown, or because the network represents conflicts in the data rather than the actual evolutionary history of the underlying species. 

Tree-basedness in rooted phylogenetic networks is known to be fundamentally different from the unrooted setting, as it can be decided in polynomial time \cite{francis2015phylogenetic}, whereas it is NP-complete in the unrooted case \cite{francis2018tree}. In this regard, Corollary \ref{t:rooted-unrooted} from the present manuscript will be of wide interest, as it shows that tree-basedness of rooted and unrooted phylogenetic networks is related in an explicit sense: unrooted networks are tree-based if and only if there exists a root position such that the resulting rooted network is tree-based. 

However, note that the difference in the computational complexity of the tree-basedness decision problem between the rooted and the unrooted setting immediately implies that the calculation of \emph{all} proximity measures introduced in the present manuscript for unrooted networks is necessarily NP-complete (otherwise, one could easily determine if these measures are 0, which for all of them is the case if and only if the underlying network is tree-based). So despite the relationships between the measures in the rooted case that have been introduced in \cite{francis2018new} (and which can be calculated in polynomial time) and their generalized unrooted counterparts introduced in the present manuscript, their calculation in the unrooted case is actually hard, except for some classes of networks for which tree-basedness can be guaranteed and for which the proximity measures are therefore necessarily 0~\cite{fischer2018classes}. Therefore, it would be an interesting question for future research to find good approximations to these measures.

\blue{In addition to the questions regarding $\delta_\NNI$ raised in Section~\ref{s:NNI.questions},} other open questions arising from the present manuscript relate to the relationships between the eight measures we introduced in this manuscript: while we have elaborated on some of them and seen, for instance, that some of them are equal and some act as bounds for others, we have not investigated all possible relationships between the different measures. Moreover, in Section \ref{s:rooted} we have discussed some of the issues involved in lifting the new measures to the rooted case.
 
We are confident that these questions will inspire more research, as phylogenetic networks in general and tree-based ones in particular have gained more and more importance over recent years.

\section{Acknowledgements} The authors thank Mike Steel for helpful discussions and personal communication, in particular concerning an idea for the proof of Theorem~\ref{t:degenerate.rooted-unrooted}, and Michael Hendriksen for some comments on a draft.  MF wishes to thank the DAAD for conference travel funding to the Annual New Zealand Phylogenomics Meeting, where partial results leading to this manuscript were achieved. Moreover, Mareike Fischer thanks the joint research project \textit{DIG-IT!} supported by the European Social Fund (ESF), reference: ESF/14-BM-A55-0017/19, and the Ministry of Education, Science and Culture of Mecklenburg-Vorpommern, Germany.

\end{document}